\documentclass[a4paper,USenglish,cleveref, autoref, thm-restate]{lipics-v2021}
\hideLIPIcs
\nolinenumbers
\usepackage[T1]{fontenc}
\usepackage{graphicx}
\usepackage[dvipsnames]{xcolor}
\usepackage{tikz}
\usepackage{xspace}
\usepackage{romanbar}
\usepackage{wrapfig}
\usepackage{complexity}
\usepackage{comment}

\usepackage{todonotes}

\newcommand{\TI}{\mathrm{I}}
\newcommand{\TII}{\mathrm{II}}
\newcommand{\TIII}{\mathrm{III}}
\newcommand{\TIV}{\mathrm{IV}}
\newcommand{\TV}{\mathrm{V}}
\newcommand{\TVI}{\mathrm{VI}}

\newcommand{\I}{{\Romanbar{I}\xspace}}
\newcommand{\II}{{\Romanbar{II}\xspace}}
\newcommand{\III}{{\Romanbar{III}\xspace}}
\newcommand{\IV}{{\Romanbar{IV}\xspace}}
\newcommand{\V}{{\Romanbar{V}\xspace}}
\newcommand{\VI}{{\Romanbar{VI}\xspace}}

\newcommand{\nI}{n_1\xspace}
\newcommand{\nII}{n_2\xspace}
\newcommand{\nIII}{n_3\xspace}
\newcommand{\nIV}{n_4\xspace}
\newcommand{\nV}{n_5\xspace}
\newcommand{\nVI}{n_6\xspace}

\newclass{\Q}{PseudoP}

\DeclareFontShape{T1}{lmr}{m}{scit}{<-> ssub * lmr/m/scsl}{}

\newtheorem{problem}{Problem}

\title{Perfect Rectangular Tilings with Two Colors} 

\author{Oswin Aichholzer}{TU Graz, Austria}{oswin.aichholzer@tugraz.at}{https://orcid.org/0000-0002-2364-0583}{}
\author{Robert Ganian}{TU Wien, Austria}{rganian@ac.tuwien.ac.at}{https://orcid.org/0000-0002-7762-8045}{Austrian Science Fund (FWF, Project 10.55776/Y1329) and Vienna Science and Technology Fund (WWTF, Project 10.47379/ICT22029).}
\author{Phillip Keldenich}{TU Braunschweig, Germany}{keldenich@ibr.cs.tu-bs.de}{https://orcid.org/0000-0002-6677-5090}{}
\author{Maarten L\"offler}{Utrecht University, the Netherlands}{m.loffler@uu.nl}{https://orcid.org/0009-0001-9403-8856}{}
\author{Gert Meijer}{Stenden University of Applied Sciences, the Netherlands}{gert.meijer@nhlstenden.com}{https://orcid.org/0009-0006-1184-2109}{}
\author{Ids de Vlas}{Utrecht University, the Netherlands}{i.l.devlas@students.uu.nl}{}{Austrian Science Fund (FWF, Project 10.55776/Y1329).}
\author{Alexandra Weinberger}{FernUniversit\"at in Hagen, Germany}{alexandra.weinberger@fernuni-hagen.de}{https://orcid.org/0000-0001-8553-6661}{}
\author{Carola Wenk}{Tulane University, United States}{cwenk@tulane.edu}{https://orcid.org/0000-0001-9275-5336}{National Science Foundation CCF 2107434}
\authorrunning{Aichholzer, Ganian, Keldenich, L\"offler, Meijer, de Vlas, Weinberger, Wenk}
\Copyright{Oswin Aichholzer and Robert Ganian and Phillip Keldenich and Maarten L\"offler and Gert Meijer and Ids de Vlas and Alexandra Weinberger and Carola Wenk} 
\ccsdesc[100]{Theory of computation~Computational Geometry} %Please choose ACM 2012 classifications from https://dl.acm.org/ccs/ccs_flat.cfm 

\keywords
{Wang tiles, 
perfect tiling,
pseudo-polynomial algorithms,
characterization,
} %TODO mandatory; please add comma-separated list of keywords

\category{} %optional, e.g. invited paper

\relatedversion{Full version of a paper with the same title at ISAAC 2026. A preliminary version of this work was presented as a poster at GD 2025~\cite{poster}.} %optional, e.g. full version hosted on arXiv, HAL, or other respository/website
\acknowledgements{This research was initiated at Dagstuhl Seminar 25201, ``Computational Geometry''. We thank the organizers and all participants for a stimulating atmosphere and valuable discussions. This research was made possible by the Centre for Unusual Collaborations (CUCo) under the project ``Going Off the Grid''.}
\begin{document}
% \tableofcontents
% \vspace*{5ex}

%\pI, \pII, \pIII, \pIV, \pV, \pVI\\
%\mypI

% \carola{The plan is to work on a submission to \underline{\href{https://www.algo-door.com/isaac2026/call-for-papers.html}{ISAAC}} for now. Deadline: June 29, 2026 (AoE). Length: 12 pages, excluding title page, references, abstract.}
% \newpage

% \section*{Brainstorming about Problem 1}

% {\bf Can we prove that our Problem 1, and the cleaner statement where we just sum unweighted $(a_i, b_i)$, are equivalent in the sense that a solution to one can be used to solve the other?}

% Comments and thoughts here...

\maketitle

%TODO mandatory: add short abstract of the document
\begin{abstract}
We study a finite tiling problem, where tiles are unit squares whose four edges are colored with one of two colors. We ask whether a given rectangle admits a
\emph{perfect rectangular tiling}: every cell of the rectangle is occupied by
one tile, neighboring edge colors match, and exactly
\(n_i\) tiles of type \(i\) are used, where rotations of the tiles are allowed.

Our problem is related to classical Wang tilings, more general finite
tile-placement problems, and edge placement puzzles. But in our problem, the
multiplicities of the tile types are part of the input and the tile alphabet is fixed and extremely small; thus the complexity of the problem arises from the
interaction between the rectangle dimensions and the prescribed tile
multiplicities.

We provide a comprehensive study of the perfect rectangular tiling problem. For this we consider all classes of subsets of the six possible tile types for two-colored edges, and we characterize for each class whether multiplicities either always allow a perfect rectangular tiling or whether their existence can be decided efficiently. 
\end{abstract}
\newpage

\setcounter{page}{1}

\section{Introduction}\label{sec:intro}

%\carola{Title brainstorming here}
%\newline
%Graph Tiles\\
% Tiling with a Bounded Number of Tiles\\
% Finite Tiling\\
% Tiling a Finite Region\\
% Finite graph tiling with two colors\\
% Tiling with Tile Quotas\\
%The Complexity of Two-Color Tiling with Tile Quotas\\
%The Complexity of Perfect Two-Color Tiling\\
% Perfect Two-Color Tiling(s)\\
% Perfect Two-Color Tiling(s) of Rectangles\\
% Perfect Rectangle Tiling with Two Colors\\
%Perfect Rectangular Tilings with Two Colors\\

Tiling problems ask how local compatibility rules constrain global
arrangements.  A tile system specifies which pieces may be placed next to one
another, and the central question is whether these local rules can be realized
on a prescribed domain.  In many classical tiling problems, each tile type is
available in unlimited supply.  In finite tiling problems, however, the
inventory itself is part of the input: one is given a finite number of tiles of
each type and asks whether a given region can be tiled using exactly those
tiles.

In this paper we study a particularly small but nontrivial finite tiling
problem.  The tiles are unit squares whose four edges are colored with one of
two colors. Adjacent tiles must have matching colors on their shared edge, and tiles may be rotated.  
We visualize these square tiles as having a single central vertex, with half-edges either present or absent toward each of the four sides.
 By allowing tiles to rotate, this gives us six distinct tiles, see Figure~\ref {fig:six_tiles}. 

\begin{figure}[!htb]
\centering
\includegraphics[page=2,scale=0.7]{figures/figure_tiling_problem_statement.pdf}
\caption{The six different types of tiles: all possible tiles with at most two colors, up to rotation.}
\label{fig:six_tiles}
\end{figure}

\noindent
{\bf Problem statement.}
An instance of our problem consists of nonnegative integers $n_1,\ldots, n_6$,
specifying the number of available tiles of the six types $\I, \II, \ldots, \VI$, together with a
rectangle of size \(w\times h\).
We assume without loss of generality that $w \le h$, and we require that $wh=\sum_{i=1}^6 n_i$.  We ask whether the rectangle admits a
\emph{perfect rectangular tiling}: every cell of the rectangle is occupied by
one tile, rotations of tiles are allowed, neighboring edge colors match, and exactly
\(n_i\) tiles of type \(i\) are used.  
%Thus a necessary condition is $wh=\sum_{i=1}^6 n_i$.
The rectangle boundary does not impose additional constraints.
The problem is to decide whether this finite tile inventory can be arranged so that
the local matching constraints are satisfied everywhere.
See Figure~\ref {fig:cases} for some example tilings.

\begin{figure}[htbp]
  \centering
    \includegraphics [scale = 0.35] {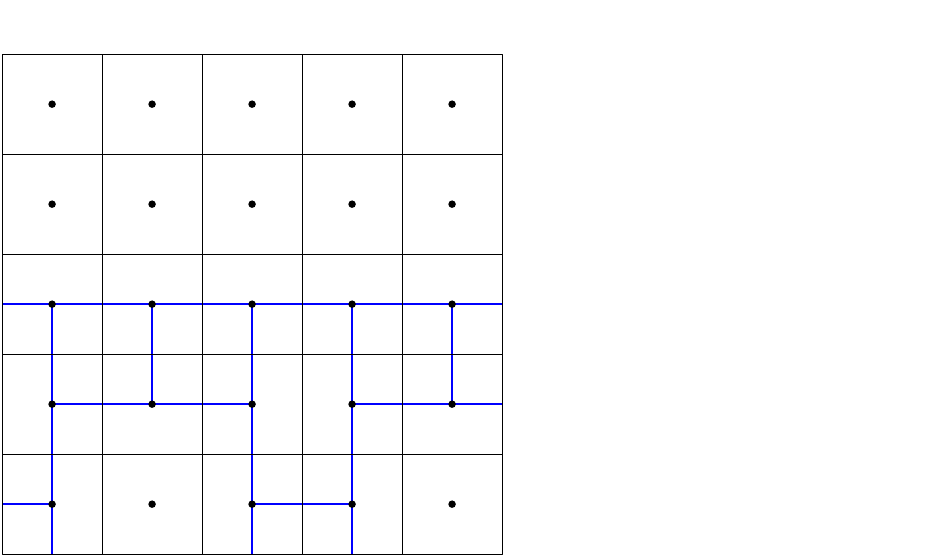}
    \hspace{-2.5cm}
    \includegraphics [scale = 0.35] {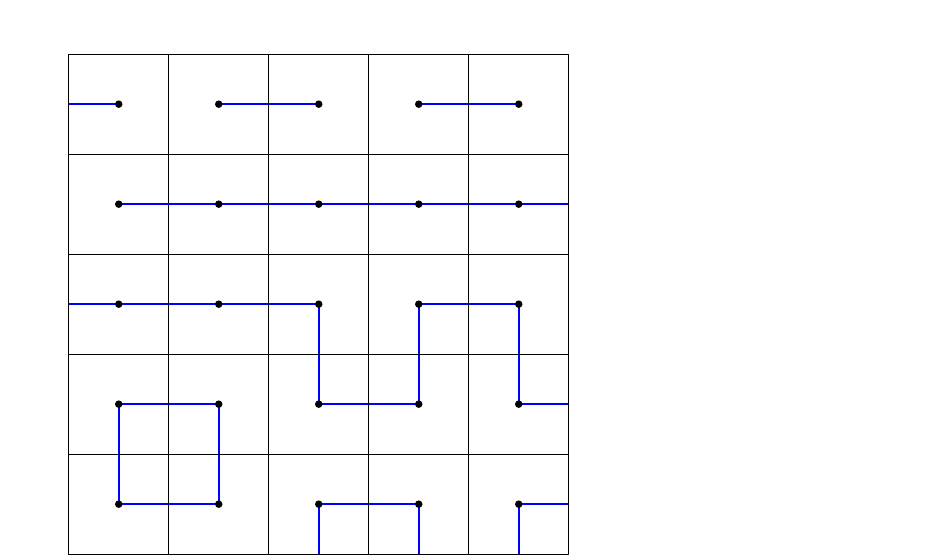}
    \hspace{-1.75cm}
    \includegraphics [scale = 0.35, page=2] {figures/I-III-VI}
%  \caption {Example tilings of a $5\times 5$ square with  different classes and numbers of available tile types: $\langle \I, \V \rangle$ with $n_1=12$ and $n_5=13$; $\langle \II, \III, \IV \rangle$ with $n_2=6$, $n_3=6$, and $n_4=13$; $\langle \I, \III, \VI \rangle$ with $n_1=6$, $n_3=13$, and $n_6=6$.}
\caption{Example tilings of a $5\times 5$ square. Left:  $\langle \I, \V \rangle$ with $n_1=12$, $n_5=13$. Center: $\langle \II, \III, \IV \rangle$ with $n_2=6$, $n_3=6$, $n_4=13$. Right: $\langle \I, \III, \VI \rangle$ with $n_1=6$, $n_3=13$, $n_6=6$.}
  \label {fig:cases}
\end {figure}

%The restriction to two colors gives a useful equivalent formulation.  Instead of placing tiles directly, one may assign a color \(0\) or \(1\) to every edge of the \(w\times h\) rectangular grid.  Each cell then sees four incident edge colors and therefore determines one of the six rotation classes above.  
%In this view, a perfect tiling is exactly a binary edge-coloring of the rectangular grid in which the six induced cell types occur with the prescribed multiplicities \(n_1,\ldots,n_6\).  The local matching condition is automatic, because adjacent cells share the same grid edge; the difficulty lies in realizing the prescribed global count vector.

This places our problem between classical Wang tiling and more general finite
tile-placement problems.  As in Wang tiling, local compatibility is expressed
by matching edge colors.  However, unlike the usual unlimited-supply setting,
the multiplicities of the tile types are part of the input.  At the same time,
the tile alphabet is fixed and extremely small: there are only two colors and
six rotation classes (tile types).  Thus the complexity of the problem cannot come from an
arbitrary collection of tile shapes or colors, but must arise from the
interaction between the rectangle dimensions and the six prescribed
multiplicities.

Such tiling problems also arise in applied domains such as material design.
In particular, the six tiles in Figure~\ref{fig:six_tiles} have been identified
as a minimal tile set rich enough to encode meaningful architected
materials~\cite{10.1002/adma.202305198}.  In virtual-growth approaches to
irregular architected materials, these tiles are assembled on a predefined grid
according to connectivity rules, and varying their relative concentrations can
produce distinct architectures with different mechanical and functional
properties~\cite{fox2024extractinggeometrytopologyorange,science2022,10.1002/adma.202305198}.

\subsection{Our Contributions}

We start by observing that the generalized problem of perfect tiling in a simple orthogonal polygon (rather than a rectangle) is \NP-hard even for orthogonally convex histograms; see Theorem~\ref {thm:hardness-simple-polygons} in the appendix.
However, when the domain is a rectangle, 
the input consists of a constant number of non-negative integers, and the encoding of these integers has a fundamental impact on the problem's complexity. In this paper we focus on the more general setting of binary encoding, which means that $w, h, $ and the $n_i$ can be exponential in the input size. %Thus a polynomial-time decision algorithm has to be polynomial in the sum of the logarithms 
In the computationally easier unary-encoded setting,
there are at most $n^{\mathcal{O}(1)}$ possible instances for each input size $n$.
Consequently, the unary-encoded variant of the problem belongs to the complexity class \P/$\mathsf{poly}$, and thus by the Karp-Lipton Theorem~\cite{KarpL80} cannot be \NP-hard unless the polynomial hierarchy collapses.

As it turns out, the perfect rectangular tiling problem in some cases requires solving the following number-theoretical problem or slight variants of it; see, for instance, \cref{thm:126} or \cref{thm:1246}.

\begin {problem}
  Given an integer $n$, find a sequence of tuples $(a_1, b_1), \ldots, (a_k, b_k)$ such that
  \begin {itemize}
    \item $\sum_{i=1}^k a_ib_i = n$, and
    \item $\sum_{i=1}^4 (a_i + b_i) + \sum_{i=5}^k (2a_i + b_i)$ is minimized.
  \end {itemize}
  \label{prob:mcd}
\end {problem}

Essentially, the most natural version of this problem (in which the weights of all $a_i$ and $b_i$ are uniform) asks for the decomposition of an integer area into integer rectangles that minimizes the total circumference.
The problem has ties to \emph{integer factorization}, a problem for which it is widely believed that no polynomial-time algorithm exists, although the complexity is unknown; it is also related to classical results such as Lagrange's four-square theorem.

%\noindent
{\bf Tile classes.}
Since the general problem is likely to be hard, we introduce \emph {classes}: a class groups instances in which not all tile types are present.
In this work, we provide a comprehensive study of the different classes in the perfect rectangular tiling problem. 
A tile type only contributes to a class if its multiplicity $n_i$ is non-zero. 
We therefore consider all classes of subsets of the six tile types that have non-zero multiplicities.
These classes also exhibit a {\em duality} property: if all sides with half-edges are replaced by sides without half-edges and vice versa, we obtain an equivalent dual problem: a tiling is valid iff its dual is.
Classes are denoted with $\langle . \rangle$ and duality with $\hat=$.

%\noindent
{\bf Contributions.}
For each class we present the status of the decision problem for this specific subset of non-zero tile multiplicities as summarized in Table~\ref{tab:results}. Here, YES means a perfect rectangular tiling always exists, for any choice of multiplicities. NO means such a tiling never exists. 
We cover these cases in Section~\ref{sec:yes_no}.
Then of course there are classes where for some values of multiplicities a perfect rectangular tiling exists, while for other values it does not. 
We provide algorithms for deciding whether such a tiling exists for a given input ($w, h, n_1, \ldots, n_6)$  in either polynomial or pseudo-polynomial time. 
Here, {\em polynomial time}, \P, denotes runtime polynomial in the input size  $\log(w)+\log(h)+\sum_{k=1}^6 \log(n_k)$. This is achieved by providing necessary and sufficient conditions on the values of the multiplicities and the rectangle dimensions, see Section~\ref{sec:P}.
{\em Pseudo-polynomial time}, \Q, denotes runtime polynomial in $w+h+\sum_{k=1}^6 n_k$. Carefully crafted dynamic programming algorithms provide these results, see Section~\ref{sec:pseudoP}.
While we settle most of the cases with either YES, NO, \P, or \Q, a few cases remain open, denoted with ``?'', where the subscript denotes a conjectured runtime. These are discussed in Section~\ref{sec:open}; most of the cases with label \Q{} are related to Problem~\ref{prob:mcd} or its variants.

% Each problem class may have value 
% YES (if all problems in the class are compatible with all rectangle dimensions), 
% NO (if none of the problems in the class are compatible), 
% \P{} (if compatible and incompatible problems both exist in the class, but it is possible to distinguish them in time polynomial in $\sum_{k=1}^6 \log(n_k)$), or
% \Q{} (if compatible and incompatible problems both exist in the class, but it is possible to distinguish them in pseudo-polynomial time, that is, time polynomial in $\sum_{k=1}^6 n_k$).
% Table~\ref{tab:results} summarizes our results, where remaining open cases are marked with ``?''.

\begin{table}[tbp]
\centering
\scriptsize
\setlength{\tabcolsep}{2pt}
\renewcommand{\arraystretch}{1.18}

\makebox[\linewidth][c]{%
\begin{tabular}{@{}l@{\hspace{0.45em}}r@{\hspace{0.45em}}|
                @{\hspace{0.45em}}l@{\hspace{0.45em}}r@{\hspace{0.45em}}|
                @{\hspace{0.45em}}l@{\hspace{0.45em}}r@{}}

\textbf{1 tile type} & &
\textbf{3 tile types} & &
\textbf{4 tile types} & \\[0.55em]

$\langle \I\rangle \hat= \langle \VI\rangle$ & YES &
$\langle \I,\II,\III\rangle \hat= \langle \III,\V,\VI\rangle$ & YES &
$\langle \I,\II,\III,\IV\rangle \hat= \langle \III,\IV,\V,\VI\rangle$ & YES \\

$\langle \II\rangle \hat= \langle \V\rangle$ & YES &
$\langle \I,\II,\IV\rangle \hat= \langle \IV,\V,\VI\rangle$ & YES &
$\langle \I,\II,\III,\V\rangle \hat= \langle \II,\III,\V,\VI\rangle$ & YES \\

$\langle \III\rangle$ & YES &
$\langle \I,\II,\V\rangle \hat= \langle \II,\V,\VI\rangle$ & YES &
$\langle \I,\II,\III,\VI\rangle \hat= \langle \I,\III,\V,\VI\rangle$ & $?_\Q$ \\

$\langle \IV\rangle$ & YES &
$\langle \I,\II,\VI\rangle \hat= \langle \I,\V,\VI\rangle$ & \Q &
$\langle \I,\II,\IV,\V\rangle \hat= \langle \II,\IV,\V,\VI\rangle$ & YES \\[0.55em]

\textbf{2 tile types} & &
$\langle \I,\III,\IV\rangle \hat= \langle \III,\IV,\VI\rangle$ & YES &
$\langle \I,\II,\IV,\VI\rangle \hat= \langle \I,\IV,\V,\VI\rangle$ & \Q \\

$\langle \I,\II\rangle \hat= \langle \V,\VI\rangle$ & YES &
$\langle \I,\III,\V\rangle \hat= \langle \II,\III,\VI\rangle$ & \P &
$\langle \I,\II,\V,\VI\rangle$ & \Q \\ %hopefully

$\langle \I,\III\rangle \hat= \langle \III,\VI\rangle$ & \P &
$\langle \I,\III,\VI\rangle$ & \P &
$\langle \I,\III,\IV,\V\rangle \hat= \langle \II,\III,\IV,\VI\rangle$ & YES \\

$\langle \I,\IV\rangle \hat= \langle \IV,\VI\rangle$ & YES &
$\langle \I,\IV,\V\rangle \hat= \langle \II,\IV,\VI\rangle$ & YES &
$\langle \I,\III,\IV,\VI\rangle$ & $?_\P$ \\ %The one with many cases

$\langle \I,\V\rangle \hat= \langle \II,\VI\rangle$ & \P &
$\langle \I,\IV,\VI\rangle$ & \P &
$\langle \II,\III,\IV,\V\rangle$ & YES \\[0.55em]

$\langle \I,\VI\rangle$ & NO &
$\langle \II,\III,\IV\rangle \hat= \langle \III,\IV,\V\rangle$ & YES &
\textbf{5 and 6 tile types} & \\[0.25em]

$\langle \II,\III\rangle \hat= \langle \III,\V\rangle$ & YES &
$\langle \II,\III,\V\rangle$ & YES &
$\langle \I,\II,\III,\IV,\V\rangle \hat= \langle \II,\III,\IV,\V,\VI\rangle$ & YES \\

$\langle \II,\IV\rangle \hat= \langle \IV,\V\rangle$ & YES &
$\langle \II,\IV,\V\rangle$ & YES &
$\langle \I,\II,\III,\IV,\VI\rangle \hat= \langle \I,\III,\IV,\V,\VI\rangle$ & $?_\Q$ \\

$\langle \II,\V\rangle$ & YES &
& &
$\langle \I,\II,\III,\V,\VI\rangle$ & $?_\Q$ \\

$\langle \III,\IV\rangle$ & YES &
& &
$\langle \I,\II,\IV,\V,\VI\rangle$ & $?_\P$\\ %Conjectured to be in P

& &
& &
$\langle \I,\II,\III,\IV,\V,\VI\rangle$ & $?_\Q$ \\

\end{tabular}
}

\caption{Overview of results for all classes of tile types.}
\label{tab:results}
\end{table}

\subsection{Related work}
Tiling problems have long been studied in mathematics and theoretical computer
science, from tilings of the plane to finite packing problems with local
compatibility constraints; see, e.g.,
\cite{tiling-book,tilings-and-patterns} for general background.
Wang tiles~\cite{6773658} are a central model in which square tiles carry
edge colors that must match on adjacent tiles, and tile self-assembly studies
related local-rule systems motivated by nanoscale
construction~\cite{demaine2008stagedselfassemblynanomanufacturearbitraryshapes}.
Our setting differs in that the tile alphabet is fixed, and the finite
multiplicities of the tile types are part of the input.

The closest work on the same local tile universe is due to Moore, Rapaport,
and Rémila~\cite{mrr-tgwt-02}, who study rotation-invariant tilings of finite
regions using two-color Wang tiles. Their setting has the same six rotation
classes, but assumes unlimited tile quantities and prescribed boundary colors.
Tyburec and Zeman~\cite{Tyburec_2023} study finite-domain Wang tiling in a
general integer-programming framework. Their model provides a broad bounded
formulation, but is not rotation-invariant and does not impose prescribed
finite multiplicities.

Finite edge-matching puzzles form another related line of work. Bosboom
et al.~\cite{Jigsaw1xn_JIP} study finite collections of square tiles with
colored edges that may be rotated and placed into a rectangular board. Their
formulation includes finite inventories, but their hardness and
inapproximability results are for unrestricted color alphabets; our problem is
a fixed-alphabet, two-color, six-type specialization. Ebbesen, Fischer, and
Witt~\cite{efw-empwr-11} study finite edge-matching puzzles under allowed
manipulations such as swaps and rotations. Their swaps-and-rotations variant
is closest to allowing translations and rotations of a finite tile inventory:
they give a linear-time algorithm for the single-row two-color case, while
their hardness results use larger color sets. Demaine and
Demaine~\cite{dd-jpemppcc-07} show equivalences among jigsaw puzzles,
edge-matching puzzles, and polyomino packing, proving NP-completeness for
general instances. These broader models allow richer tile or color alphabets,
but share the finite-multiset aspect that distinguishes our problem from
unlimited-supply tiling.

\section{YES and NO Classes\label{sec:yes_no}}
It is easy to see that all classes with one tile type always have a perfect rectangular tiling: one can always fill an arbitrary shape with repeated copies of the same tile (depending on the tile type, we may need to mirror alternating tiles horizontally or vertically).

%However, when multiple tile types have to be placed together, the problem becomes much more interesting. We start our investigation in this section by looking at classes with exactly {\em two} tile types, as understanding these will be key to the general problem. After covering all pairs of tile types, we state some general observations that will be useful later.

In only one case, $\langle \I, \VI \rangle$, the answer is always NO, since tiles of type $\I$ and $\VI$ can never be placed next to each other. In fact, the presence of both type $\I$ and type $\VI$  in a class will prove to be a good indicator for how complex the class is to deal with.

Of the remaining eight classes with two tile types, six of them always have a perfect rectangular tiling:   $\langle \I, \II \rangle \hat= \langle \V, \VI \rangle$,
  $\langle \I, \IV \rangle \hat= \langle \IV, \VI \rangle$,
  $\langle \II, \III \rangle \hat= \langle \III, \V \rangle$,
  $\langle \II, \IV \rangle \hat= \langle \IV, \V \rangle$,
  $\langle \II, \V \rangle$, and $\langle \III, \IV \rangle$.
This can be seen by filling the rectangle greedily in reading order (row-by-row top-down, left-to-right): First place all tiles of one type. Then place all tiles of the other type, while making minor adjustments to ensure minor parity constraints along the boundary are fulfilled; see \cref{fig:yes_cases}.

\begin{figure}[htbp]
    \centering
    \includegraphics[width=.9\textwidth]{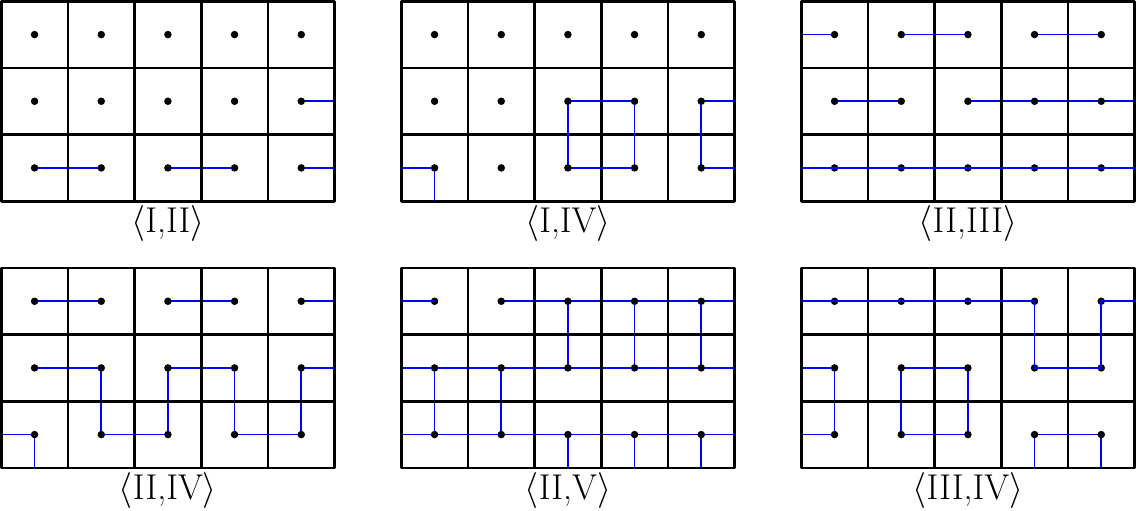}
    \caption{Greedy tilings for classes with two different tile types.
             The tiling for $\langle \I, \IV \rangle$ shows parity adjustments for the case $n_4 \equiv 3 \mod 4$.}
    \label{fig:yes_cases}
\end{figure}

The following lemma on filling rectangles with type $\IV$ tiles is useful for several cases.
\begin{lemma}
\label{lem:type4-constrained-rect}
Consider a $w' \times h'$-rectangle $R$ with constraints on up to two consecutive sides.
These constraints specify, for each cell edge on the constrained sides, whether it needs a halfedge or no halfedge to be pointing towards it from the interior of $R$.
Then, it is always possible to fill $R$ with $w'h'$ \IV-tiles.
\end{lemma}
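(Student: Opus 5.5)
The plan is to treat a filling of $R$ as a binary coloring of the grid edges of $R$ rather than as a placement of tiles. I read type $\IV$ as the tile with exactly two half-edges, pointing to two \emph{adjacent} sides (the corner tile). Assigning ``half-edge'' or ``no half-edge'' to every unit edge of the $w'\times h'$ grid automatically makes the colors on shared edges match. Such an assignment is realizable by $w'h'$ $\IV$-tiles (rotations allowed) exactly when every cell sees half-edges on precisely two adjacent sides. Equivalently, every cell has exactly one of its top/bottom edges marked and exactly one of its left/right edges marked. Indeed, two marks with one horizontal and one vertical side are always adjacent, and any rotation of the corner tile is permitted.

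First I will show that this condition decouples. Fix a column of $R$. The horizontal edges in that column form a sequence of $h'+1$ edges. The condition that each cell in the column has exactly one of its top and bottom edges marked says that consecutive edges in this sequence alternate. Hence the whole column is determined by a single free bit, the status of its topmost edge, and any choice of this bit is valid. Symmetrically, the vertical edges in each row alternate, and each row is determined by the bit of its leftmost edge. The choices for distinct columns and rows are independent, because the horizontal and vertical conditions involve disjoint sets of edges. So valid $\IV$-fillings correspond bijectively to arbitrary choices of $w'+h'$ bits.

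Next I impose the constraints. After rotating $R$, I may assume the (at most two) constrained sides are among the top and the left side; this is exactly where the hypothesis of \emph{consecutive} sides is used. A constraint on the top side prescribes the status of the topmost horizontal edge in each column, so it fixes each column bit. A constraint on the left side likewise fixes each row bit. These prescriptions touch disjoint sets of free bits, so they can never conflict, and any unconstrained bits are chosen arbitrarily. The resulting edge coloring gives the desired filling with $w'h'$ $\IV$-tiles.

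There is no real obstacle, but the step deserving care is the reduction to the top and left sides. For opposite sides, say top and bottom, the bottom edge of a column is forced by the top edge through the parity of $h'$. The lemma would then fail in general, so the proof must explicitly note that two consecutive sides constrain one column family and one row family, never the same family twice.
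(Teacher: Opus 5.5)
Your proof is correct and rests on the same idea as the paper's: the horizontal and vertical half-edges of a \IV-tile can be chosen independently, so constraints on the top and left sides propagate through $R$ without conflict. The paper phrases this as a greedy fill in column-major order, where each cell sees at most one constraint from above and one from the left. Your alternation argument makes the same propagation explicit as a bijection with $w'+h'$ free bits, which also explains why constraints on opposite sides could fail by parity.
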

\begin{proof}
Observe that we can independently choose whether the horizontal halfedge of a \IV-tile points left or right and whether the vertical halfedge points up or down.
W.l.o.g., let the top side (and optionally the left side) of $R$ carry constraints; otherwise, rotate or reflect $R$.
Then, we can construct a tiling of $R$ satisfying the constraints as follows:
when filling $R$ one cell at a time (in a column-by-column left-to-right, top-to-bottom fashion), we always have at most two constraints -- precisely one from the top and at most one from the left.
It is always possible to rotate a \IV-tile to satisfy such constraints; if we do not have a constraint from the left (in the leftmost column if the left side does not carry constraints), we arbitrarily choose to point the horizontal edge towards the left.
\end{proof}

For classes with three, four, or five tile types, there are also cases where the simple greedy approach guarantees the existence of a perfect rectangular tiling. We provide a proof sketch for classes with three tile types below. Additional proofs are in the appendix.
%We first investigate the classes for which a valid tiling always exist. Analogously to the classes in Theorem~\ref {thm:cases2}, we observe that whenever all instances of a class are realizable, they are realizable with a very simple method, which essentially places the tiles in lexicographical order with minimal deviations to repair edge cases.

\begin{restatable}{theorem}{thmyesThree}
  For classes
  $\langle \I, \II, \III \rangle \hat= \langle \III, \V, \VI \rangle$,
  $\langle \I, \II, \IV \rangle \hat= \langle \IV, \V, \VI \rangle$,
  $\langle \I, \II, \V \rangle \hat= \langle \II, \V, \VI \rangle$,
  $\langle \I, \III, \IV \rangle \hat= \langle \III, \IV, \VI \rangle$,
  $\langle \I, \IV, \V \rangle \hat= \langle \II, \IV, \VI \rangle$,
  $\langle \II, \III, \IV \rangle \hat= \langle \III, \IV, \V \rangle$,
  $\langle \II, \III, \V \rangle$, $\langle \II, \IV, \V \rangle$,
  a perfect rectangular tiling always exists.
\label{thm:cases3}
\end{restatable}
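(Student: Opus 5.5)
By the duality $\I\leftrightarrow\VI$, $\II\leftrightarrow\V$ (with $\III$ and $\IV$ self-dual), it suffices to treat one class from each dual pair. That leaves eight classes: $\langle \I,\II,\III\rangle$, $\langle \I,\II,\IV\rangle$, $\langle \I,\II,\V\rangle$, $\langle \I,\III,\IV\rangle$, $\langle \I,\IV,\V\rangle$, $\langle \II,\III,\IV\rangle$, $\langle \II,\III,\V\rangle$, $\langle \II,\IV,\V\rangle$. I use the reading of the tiles consistent with the text and with Lemma~\ref{lem:type4-constrained-rect}: $\I$ is empty, $\II$ has one half-edge, $\III$ is straight, $\IV$ is a corner, $\V$ has three half-edges, $\VI$ has four.

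For every class I use the greedy scheme from the two-type case. Fix an order of the three types and place all tiles of the first type in reading order, then the second, then the third. Each type therefore occupies a ``staircase'' band: some full rows plus a partial row at each end. Because the rectangle boundary is unconstrained, half-edges may always point outward, so only the interfaces between consecutive bands and between neighbours inside a band matter.

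The order of the types is chosen so that each interface is easy to serve.
\begin{itemize}
\item $\I$ always comes first. Its neighbours must then send no half-edge towards it, which is possible for the following types:
\begin{itemize}
\item a $\II$ points away from it;
\item a $\III$ is oriented parallel to it;
\item a $\IV$ chooses its vertical half-edge to point away (Lemma~\ref{lem:type4-constrained-rect});
\item a $\V$ has its missing arm facing it.
\end{itemize}
A $\V$ below a partial row of $\I$ can only avoid one neighbour, so in $\langle \I,\IV,\V\rangle$ and $\langle \I,\II,\V\rangle$ I put the middle type between $\I$ and $\V$ as a buffer.
\item $\IV$ is placed last whenever it occurs. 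The region left for it is a staircase; I cut it into at most two rectangles, each constrained only on its top side and possibly its left side, and fill each one by Lemma~\ref{lem:type4-constrained-rect}.
\end{itemize}
This immediately settles $\langle \I,\II,\IV\rangle$, $\langle \I,\III,\IV\rangle$, $\langle \II,\III,\IV\rangle$ and $\langle \II,\IV,\V\rangle$, once one checks that the $\II$, $\III$ and $\V$ bands can be tiled internally with the required boundary behaviour.

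The classes without $\IV$ need explicit parity repairs, in the same style as Figure~\ref{fig:yes_cases}.
\begin{itemize}
\item $\III$ tiles are laid horizontally in rows, so a maximal run of $\III$'s must end at the rectangle boundary or at a tile offering a half-edge. A $\II$ pointing into the run works, as does a $\V$.
\item $\II$ tiles are matched in facing pairs, or point to the boundary, or into an adjacent $\III$ or $\V$.
\item $\V$ tiles are laid in rows with all arms horizontal and pointing down (or up). The row above or below them must then supply half-edges, which a $\III$ oriented vertically or a $\II$ pointing at it provides.
\end{itemize}
For instance, in $\langle \I,\II,\III\rangle$ I take the $\I$ band, then the $\II$'s, then the $\III$ band. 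The $\III$'s run horizontally, and the only conflict is a partial row in which a $\III$ run meets a $\II$. There I turn that $\II$ to point into the run, or, when no $\II$ is available there, I end the run at the boundary.

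I expect the main obstacle to be these local parity repairs where two partial rows meet, together with degenerate inputs: $w=1$, very few tiles of some type, or a type occupying less than one row. My plan is to treat $w=1$ separately, as a path where each class is easy to check by hand. For $w\ge 2$ I would argue that at most $O(1)$ cells near each of the two band boundaries need adjusting. I then verify, by a short case analysis on the residues of the $n_i$ modulo $w$ and modulo $2$, that a suitable swap or rotation of one or two tiles always fixes the mismatch.
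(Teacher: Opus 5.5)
Your scheme has a genuine gap at $\langle \I,\III,\IV\rangle$. Under your rules ($\I$ first, $\IV$ last) the order must be $\I,\III,\IV$. Write $n_1=kw+r$ with $k\ge 1$ and $0<r<w$. The first $\III$-tile then lands in cell $(k+1,r+1)$, which has $\I$-tiles both to its left and above it. A $\III$-tile's two missing arms are opposite, so it can avoid $\I$-neighbours on only one axis, exactly like the $\V$-tile you flagged: placed horizontally it needs a left arm, placed vertically it needs an up arm. Your bullet ``a $\III$ is oriented parallel to it'' is therefore false at the concave corner of the $\I$-staircase, and this class is not settled ``immediately''.

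Nor is this a local parity repair. The only tile that fits that corner is a $\IV$ pointing right and down. Now suppose $n_4=1$ and $h\ge k+2$. Then the cell $(k+2,r)$ is a $\III$ below an $\I$, hence horizontal. So $(k+2,r+1)$ would need both an up arm and a left arm, i.e.\ a second $\IV$. Take $w=5$, $h=3$, $n_1=n_3=7$, $n_4=1$: a tiling exists, yet no tiling has the $\I$-tiles in their greedy positions. Hence no swap of $O(1)$ tiles near the band interfaces can rescue the band structure, and with more $\IV$-tiles the same conflict simply propagates.

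What is missing is a different global layout for this class, which is what the paper uses. Also write $n_3=\ell w+s$. First stack $k+\ell$ full rows of $\I$-tiles and horizontal $\III$-tiles; neither type has vertical arms, so these rows stack freely. Then place the remainders:
\begin{itemize}
\item If $r+s<w$, use a single row of $s$ horizontal $\III$'s, one $\IV$ pointing left and down, and $r$ $\I$'s.
\item If $r+s\ge w$, then $r+s\le 2w-2$ forces $n_4\ge 2$, and two rows are used. The first row has the $r$ $\I$'s, a $\IV$ in column $r+1$, and $\III$'s to its right. The second row has a $\IV$ directly below the first one and the remaining $\III$'s to its left.
\end{itemize}
Only then is Lemma~\ref{lem:type4-constrained-rect} applied to the rows below.

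For the other classes your orders essentially coincide with the paper's greedy constructions. In $\langle\I,\IV,\V\rangle$, a single $\IV$ at the concave corner is the buffer, so $\IV$ is not entirely last there.
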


\begin{proof}[Proof Sketch]
We show the construction for class $\langle \TI, \TII, \TIV \rangle$, see also Figure~\ref{fig:sketch_I-II-IV}. Other classes follow a similar scheme. For class $\langle \TI, \TII, \TIV \rangle$, simply place the type $\I$ tiles first and continue with $\II$-tiles until they run out. Fill the remainder of a partially filled row with $\IV$-tiles, pointing their vertical halfedges downwards and alternating the orientation of their horizontal halfedges.
If unfilled rows remain, apply \cref{lem:type4-constrained-rect} to the remaining rectangle.
\end{proof}

\begin{figure}[htbp]
\centering
\includegraphics[width=.27\textwidth]{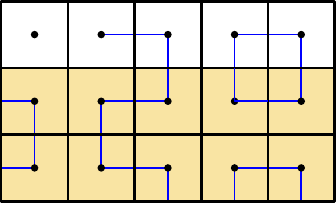}
\caption{Tiling constructed for class $\langle \TI, \TII, \TIV \rangle$. 
Yellow indicates 
%the yellow shaded area depicts an
application of \cref{lem:type4-constrained-rect}.}
\label{fig:sketch_I-II-IV}
\end{figure}

\begin{restatable}{theorem}{thmyesFourFive}\label{thm:cases4-5}
    For classes $\langle \TI, \TII, \TIII, \TIV \rangle \hat=\allowbreak \langle \TIII,\allowbreak \TIV,\allowbreak \TV,\allowbreak \TVI \rangle$, 
    $\langle \TI, \TII, \TIII, \TV \rangle \hat=\allowbreak \langle \TII, \TIII, \TV, \TVI \rangle$,
    $\langle \TI,\allowbreak \TII,\allowbreak \TIV,\allowbreak \TV \rangle \hat=\allowbreak \langle \TII,\allowbreak \TIV,\allowbreak \TV,\allowbreak \TVI \rangle$, 
    $\langle \TI,\allowbreak \TIII,\allowbreak \TIV,\allowbreak \TV \rangle \hat= \langle \TII, \TIII, \TIV, \TVI \rangle$,
    $\langle \TII, \TIII, \TIV, \TV \rangle$, and
    $\langle \TI,\allowbreak \TII,\allowbreak \TIII,\allowbreak \TIV,\allowbreak \TV \rangle \hat= \langle \TII,\allowbreak \TIII,\allowbreak \TIV,\allowbreak \TV,\allowbreak \TVI \rangle$, a perfect rectangular tiling always exists.
\end{restatable}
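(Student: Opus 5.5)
By duality it suffices to treat one representative of each dual pair: $\langle \TI,\TII,\TIII,\TIV\rangle$, $\langle \TI,\TII,\TIII,\TV\rangle$, $\langle \TI,\TII,\TIV,\TV\rangle$, $\langle \TI,\TIII,\TIV,\TV\rangle$, $\langle \TII,\TIII,\TIV,\TV\rangle$ and $\langle \TI,\TII,\TIII,\TIV,\TV\rangle$. None of these contains both $\TI$ and $\TVI$. In each of them at least one of the two ``flexible'' types $\TIV$ or $\TV$ is present.

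The construction follows the greedy reading-order scheme of \cref{thm:cases3}. I place the tile types in a fixed order, filling row by row, top-down and left-to-right, and put a flexible type last.

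The key local facts concern a cell whose top and left neighbours are already placed, so that it receives at most two constraints (top and left), while its bottom and right sides are still free.
\begin{itemize}
\item A $\TV$-tile (three half-edges) can satisfy every top/left constraint pair: its missing side is put on the bottom or right if both constraints demand half-edges, and on a demanding-absence side otherwise. So $\TV$ acts as a universal filler, just like $\TIV$ in \cref{lem:type4-constrained-rect}. Moreover, the final rectangle boundary imposes nothing, so the filler never gets stuck at the last row or column.
\item When both $\TIV$ and $\TV$ are present, I place the $\TV$-tiles (reading order) before the $\TIV$-tiles.
\item I finish the current partial row with $\TIV$-tiles whose vertical half-edge points down and whose horizontal half-edges are chosen to match the left neighbour.
\item I fill the remaining full rows by \cref{lem:type4-constrained-rect}, which has constraints only from the top side.
\end{itemize}

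It remains to place the rigid types $\TI,\TII,\TIII$ first, in the order $\TI$, then $\TII$, then $\TIII$, so that every prefix stays consistent. The $\TI$-block creates no half-edges.

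The $\TII$-tiles following it are oriented to point right or down, into not-yet-placed cells. Each such tile must only avoid a half-edge demand from above or left, and it has one demanded half-edge at most. A dead end can always be placed pointing into the free region, except when two demands arrive simultaneously. I avoid this by alternating horizontal pairs: two $\TII$-tiles pointing at each other. If $n_2$ is odd or a row ends, the leftover single $\TII$ points down into the next row.

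The $\TIII$-tiles are placed vertically wherever a half-edge comes from above and horizontally in pairs along a row otherwise. The leftover demands they create (at most one pending half-edge to the right plus half-edges downward) are exactly what a following flexible filler can absorb.

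For $\langle \TI,\TII,\TIII,\TV\rangle$ the final region is filled with $\TV$-tiles alone, using the universal filler property cell by cell. For the other classes, the $\TIV$-tiles at the end handle everything through the partial-row rule and \cref{lem:type4-constrained-rect}.

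The main obstacle I expect is the boundary and parity bookkeeping at transitions between rigid types and at row ends. Two cases in particular need care:
\begin{itemize}
\item a rigid tile (say a $\TII$ or a horizontal $\TIII$) in the last column, which must not point out of the rectangle;
\item the case where the rigid types are so numerous that they fill whole rows, so that the filler region is only a short final segment.
\end{itemize}
I would handle these as in the $\langle\TI,\TIV\rangle$ parity adjustment of \cref{fig:yes_cases}. Concretely, at a row end I reorient the last one or two rigid tiles (a horizontal $\TIII$ becomes vertical, a $\TII$ points down instead of right). If necessary, I swap the last rigid tile of a row with the first filler tile. I then check case by case that the constraints presented to the remaining filler cells are always at most top plus left, which suffices for both $\TIV$ and $\TV$.
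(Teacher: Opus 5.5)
Your framework (reading-order greedy, rigid types first, a flexible filler last) is the paper's, but the two facts that are supposed to carry it are false or missing.

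\textbf{$\TV$ is not a universal filler.} A $\TV$-tile has only one side without a half-edge. It therefore cannot go into a cell whose top and left neighbours both present no half-edge. Your own rule (missing side to the right when both constraints demand half-edges) can create exactly that situation in the next cell whenever the tile above that cell has no downward half-edge. This is not bookkeeping: if $\TV$ could absorb any reading-order prefix, placing all $\TI$-tiles first would make $\langle\TI,\TV\rangle$ a YES class, which it is not.

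Your treatment of $\langle\TI,\TII,\TIII,\TV\rangle$ rests entirely on this claim, and the $\TV$-blocks in the other classes inherit the problem. What does work, as in the $\langle\TI,\TII,\TV\rangle$ construction of \cref{thm:cases3}, is an invariant:
\begin{itemize}
\item the tile just before the first $\TV$ in reading order has a half-edge pointing right (or the block starts in column~1);
\item every $\TV$ keeps both horizontal half-edges;
\item the row above decides whether its missing side is up or down.
\end{itemize}

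\textbf{The transitions between rigid blocks are where the proof lives.} You defer them as parity bookkeeping, and your rules do not achieve them.
\begin{itemize}
\item Away from the boundary, a $\TIII$-tile fits only if exactly one of its top and left neighbours points into it. Your rule therefore fails when neither or both do.
\item Your $\TII$-scheme (pairs pointing at each other, a leftover pointing down) ends with a tile pointing left or down. The next cell, unless it starts a new row, gets no half-edge from the left; if the tile above has no downward half-edge, it gets none from above either, so a following $\TIII$ or $\TV$ is stuck. The paper instead arranges the $\TII$-block to end in the orientation the next block needs, fixing parity with a single $\TII$ in the top-left corner pointing into the boundary.
\item In $\langle\TI,\TIII,\TIV,\TV\rangle$ there is no $\TII$ to bridge at all. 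If the $\TI$-block ends mid-row below a full $\TI$-row, the next cell sees no half-edge from above or from the left, so neither $\TIII$ nor $\TV$ fits. The paper inserts a single $\TIV$ there, pointing right and down with a $\TV$ below it, and runs the $\TIII$-chain from it.
\end{itemize}

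Your concern that a rigid tile in the last column must not point out of the rectangle is misplaced. The boundary imposes no constraint, and the constructions depend on that (full rows of horizontal $\TIII$, the corner $\TII$). Without it you could not even start a horizontal $\TIII$ in column~1 below a row without downward half-edges.

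A minor point: $\TIV$-tiles completing a partial row must point up wherever the tile above points down. This is always possible, but it contradicts your ``vertical half-edge points down'' rule.
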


\section {Polynomial Classes\label{sec:P}}

\subsection{\texorpdfstring{$\langle \TI, \TIII \rangle\hat= \langle \TIII, \TVI\rangle$ and $\langle \TI, \TV \rangle\hat= \langle \TII, \TVI\rangle$ }
{<I,III>=<III,VI> and <I,V>=<II,VI>}}

%For a full classification of classes with two tile types, this leaves the two cases $\langle \I, \III \rangle$ and $\langle \I, \V \rangle$, which can be handled in polynomial time as follows.

\begin {theorem}
  For 
  $\langle \I, \III \rangle \hat= \langle \III, \VI \rangle$ a perfect rectangular tiling exists iff $\nIII \mid w$ or $\nIII \mid h$.
For $\langle\I, \V\rangle \hat= \langle \II, \VI \rangle$ a perfect rectangular tiling exists iff $\nV \in \{w, h, w+h-1\}$ or $\nV \geq 2w$.
\end {theorem}
\begin{proof}
If $\nIII$ divides $w$ or $h$, we can construct a tiling by making full rows or columns of \III-tiles.
On the other hand, if a row contains a \III-tile such that its left and right boundaries meet a halfedge, then all tiles in that row must be \III-tiles in the same orientation.
A symmetric argument holds for columns, implying that a tiling is impossible if $\nIII$ does not divide $w$ or $h$.
This implies a polynomial-time algorithm for deciding whether a tiling exists.
%\end{proof}

% \begin{theorem}
%     For class $\langle\I, \V\rangle \hat= \langle \II, \VI \rangle$ a tiling of a $w \times h$ rectangle where $w \leq h$ is possible if and only if $\nV \in \{w, h, w+h-1\}$ or $\nV \geq 2w$.
% \end{theorem}
%\begin{proof}
For $\langle\I, \V\rangle \hat= \langle \II, \VI \rangle$, to see that a tiling does not exist for $0 < \nV < w$, we observe the following. 
If there are fewer than $w$ tiles, there must be at least one tile of type $\V$ that is adjacent to two tiles that are not of type $\V$ and thus of type $\I$.
However, tiles of type $\V$ have only one side that matches to tiles of type $\I$.
The same holds if we have $w < \nV < \min(h, 2w)$ or $h < \nV < \min(2w, w+h-1)$, thus a tiling is impossible in these cases.

For the other direction, exactly $w$ tiles can be put in a boundary row and exactly $h$ tiles can be put in a boundary column, producing valid tilings.
Exactly $w+h-1$ tiles can be put on two boundary sides, sharing a corner; see \cref{fig:type-1-5}.
For any $\nV \geq 2w$, we first place full rows of $\I$-tiles.
We can then place 
%all 
$\V$-tiles in a single block of complete rows,
possibly ending with an incomplete row of $\V$-tiles to incorporate the remaining $\I$ tiles; see \cref{fig:type-1-5}~(right) and note that we can extend the block of $\V$-tiles by complete 
rows of $\V$-tiles.
\begin{figure}[hbtp]
    \centering
    \includegraphics[scale=0.67]{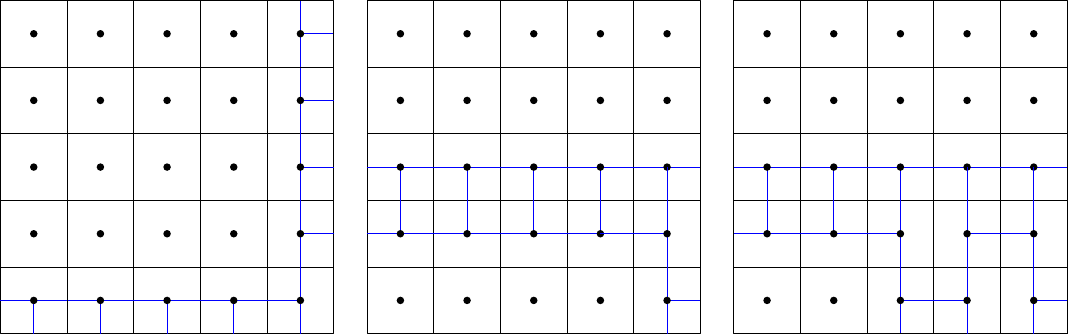}
    \caption{Realization of $w+h-1$, $2w+1$ and $2w+3$ tiles of type \V,
             showing how incomplete rows of type \V-tiles can be achieved as soon as $\nV \geq 2\min(w,h)$.}
    \label{fig:type-1-5}
\end{figure}
\end{proof}

%\subsection{P cases with 3 tile types}

%On top of the eight classes with value YES, we obtain three classes with value \P.

\subsection{\texorpdfstring{$\langle \TI, \TIII, \TV \rangle \hat= \langle \TII, \TIII, \TVI \rangle$}
{<I,III,V>=<II,III,VI>}}

\begin{restatable}{theorem}{thmIiiiV}
%  Let $R$ be a $w \times h$-rectangle with $w \leq h$, and let $\nI, \nIII, \nV > 0$ and $\nII, \nIV, \nVI = 0$.
  Let $\nIII = kw + r$ for $k \in \mathbb{N}_0$ and $r < w$ and $\nV = qh + s$ for $q \in \mathbb{N}_0$ and $s < h$.
  Then, a perfect rectangular tiling of class $\langle \TI, \TIII, \TV \rangle $ exists iff $\nIII + \nV \geq w$ and one of the following conditions hold: (1)~$w \mid (\nIII + \nV)$, (2)~$h \mid (\nIII + \nV)$, (3)~$\nV \geq \max(2, 2(w-\nIII))$, or (4)~$\nV = 1 \wedge (r < h - k \vee s < w - q)$.
\label{thm:i-iii-v-is-p}
\end{restatable}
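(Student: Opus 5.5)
The plan is to prove both directions. First I derive a structural description of the tilings in this class. Then I show that the four conditions are exactly the counting constraints that this description imposes, and I give an explicit construction for each condition. With the usual reading of the types, \I is the empty tile, \III the straight tile with two opposite half-edges, and \V the T-tile with three half-edges. No tile in the class has exactly one half-edge or two adjacent half-edges.

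\emph{Structure.} Call a maximal straight run of half-edges a \emph{segment}. Inside a segment the tiles are \III-tiles, or \V-tiles traversed along their through-direction. A segment can only stop at the rectangle boundary, or at a \V-tile whose stem is the last half-edge of the segment. In particular, a segment that contains the through-direction of some \V-tile, or that contains only \III-tiles, has both ends on the boundary. Such a segment therefore spans the whole rectangle and has length $w$ or $h$.

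Two full segments cannot cross, since that would need a \VI-tile. Hence all full segments are parallel; say they are rows. Every other segment is vertical. Each vertical segment is the stem of a \V-tile on a full row, and runs either to the boundary or to the stem side of a \V-tile on another full row. The non-\I tiles therefore form a nonempty union of $m\ge 1$ full parallel lines together with perpendicular "teeth". Since $\nIII+\nV>0$, some line exists, which gives the necessary condition $\nIII+\nV\ge w$.

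\emph{Necessity.} I split by $\nV$.
\begin{itemize}
\item If $\nV=0$, the tiling consists of full rows or full columns only, so $w\mid \nIII+\nV$ or $h\mid \nIII+\nV$.
\item If $\nV=1$, there is exactly one tooth. Say the full lines are $k'$ rows of length $w$, and the tooth has length $t$ with $1\le t\le h-k'$ or so. Then $\nIII=k'w+t-1$. Comparing with $\nIII=kw+r$ shows that a tooth exists iff $r<h-k$, or symmetrically $s<w-q$ for columns, once the off-by-one bookkeeping is done. This is condition (4).
\item If $\nV\ge 2$ and neither divisibility condition holds, the teeth force $\nV\ge 2(w-\nIII)$. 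The argument: each full line of length $w$ that is not entirely made of \V-tiles contributes \III-tiles, so $\nIII<w$ can only occur if some line consists mostly of \V-tiles. Each such \V-tile on that line needs a tooth, and a tooth contributes either further \V-tiles at its far end or \III-tiles that count towards $\nIII$. A charging argument, in which each \III-tile "missing" from a full line is paid for by two \V-tiles, gives (3).
\end{itemize}
This charging step is the one I expect to be the main obstacle. It has to handle several lines, teeth between two lines, and teeth to the boundary uniformly. I would first try an induction on the number $m$ of full lines, and check small cases against the table of orientations.

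\emph{Sufficiency.} For (1) and (2), tile complete rows (resp.\ columns) with \III- and \V-tiles in reading order, turning a \V-tile's stem outward onto the boundary where needed, and fill the rest with \I-tiles. For (4), build one full line plus a single tooth of the appropriate length; the inequalities $r<h-k$ or $s<w-q$ are exactly what guarantees that the tooth fits. For (3), mimic the construction for $\langle\I,\V\rangle$ in \cref{fig:type-1-5}. Put \I-rows on top, then a comb: a row mixing \III- and \V-tiles, with \V-stems attached to short teeth of \V/\III-tiles. The bound $\nV\ge\max(2,2(w-\nIII))$ lets us complete partial rows, and additional full rows absorb the remaining tiles. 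Finally, all four conditions and the division-with-remainder quantities can be evaluated with $O(1)$ arithmetic operations on the input numbers, so the characterization yields a polynomial-time decision procedure.
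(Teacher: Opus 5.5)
Your proposal has a genuine gap. Its structural lemma is false, and the necessity direction for $n_5\ge 2$, which is the heart of the theorem, is never actually proved.

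The inference ``a segment that contains the through-direction of some \V-tile has both ends on the boundary'' does not hold, because such a segment may end at the stem of a different \V-tile. Your own comb for (3), whose teeth are made of \V-tiles, already violates it. For example, in a $3\times 3$ rectangle:
\begin{itemize}
\item let the middle row be a \V-tile (through-direction horizontal, stem down) followed by two horizontal \III-tiles;
\item put a \V-tile in the bottom-left corner with vertical through-direction and its stem pointing left out of the rectangle;
\item use \I-tiles elsewhere, so $n_3=n_5=2$.
\end{itemize}
The column-$1$ segment contains the corner tile's through-direction but ends at the stem of the tile above it. The corner tile's stem forms a horizontal segment that is neither full nor vertical. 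There are even tilings with no full line at all. In a $4\times 4$ square, place four \V-tiles on the central $2\times 2$ block so that each of the four segments ends on the next one and its other end runs out to a different side (the paper's ``cycle with antennas''). So ``full parallel lines plus perpendicular teeth'' does not describe all tilings. Both ``some line exists'' and the setup of your charging argument rest on that description. For $n_5=1$ the lemma does hold, since the through-line of the single \V-tile must be full, so that case survives modulo the bookkeeping.

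For $n_5\ge 2$ no charging argument is needed. The missing idea is to exploit how small the total is. If none of (1)--(4) holds, then $n_5<2(w-n_3)$, hence $n_3<w$ and $n_3+n_5\le 2w-2$. Each of the following configurations needs at least $2w-1$ non-\I tiles:
\begin{itemize}
\item a configuration with no full line;
\item one with two full lines;
\item one with a full line on the boundary plus further tiles.
\end{itemize}
To see this, note that the topmost non-\I row of a component is full unless it is row $h$, and similarly for the other three sides. Also, a connected set of cells meeting all four sides has at least $w+h-1$ cells. Hence there is exactly one full line, and it lies in the interior. It has length at least $w$, so it contains at least $w-n_3$ \V-tiles. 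Each of these needs its own non-\I neighbor in stem direction, so $n_3+n_5\ge 2w-n_3$, i.e.\ $n_5\ge 2(w-n_3)$, a contradiction. This is essentially the paper's argument.

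Your step ``each such \V-tile on that line needs a tooth'' is false for a line on the boundary, since its stems may point out of the rectangle. That case has to be excluded by exactly this count.

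Sufficiency under (3) is also only gestured at. The paper needs separate constructions for:
\begin{itemize}
\item $n_3<w$ with $n_3+n_5\le 2w$;
\item $n_3<w$ with $2w<n_3+n_5<3w$ (a ladder with a mod-$4$ adjustment);
\item $n_3<w$ with $n_3+n_5\ge 3w$;
\item $n_3\ge w$ (setting aside two \V-tiles and $w-1$ \III-tiles).
\end{itemize}
Your comb sketch does not show how all stems get closed off in these regimes.
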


The proof of \cref{thm:i-iii-v-is-p} follows the cases, and is available in the appendix.

\subsection{\texorpdfstring{$\langle \TI, \TIII, \TVI \rangle$}
{<I,III,VI>}}
\begin{theorem}
A perfect rectangular tiling of class $\langle \TI, \TIII, \TVI \rangle $ exists iff
there are integers $0 < a \leq w$ and $0 < b \leq h$ such that $\nI = wh - ah - bw + ab$, $\nIII = ah + bw - 2ab$, and $\nVI = ab$. The existence of such $a$ and $b$ can be checked in polynomial time.
\end{theorem}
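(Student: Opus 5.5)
We first pin down the tile types from Figure~\ref{fig:six_tiles} and the examples: \I has no halfedges, \VI has all four, and \III is the straight tile with two opposite halfedges. In any tiling from $\{\TI,\TIII,\TVI\}$, each row is either \emph{active} (its horizontal edges all carry halfedges) or \emph{inactive} (no horizontal halfedges), and likewise for each column. We show that every horizontal edge inside a row carries the same color. Suppose the left side of a cell has a halfedge. That cell is then \VI or a horizontally oriented \III, so its right side also has a halfedge. Conversely, if the left side has no halfedge, the cell is \I or a vertical \III, so its right side has none. By induction along the row, all its horizontal edges agree, including the two boundary edges. The same argument works for columns.

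Let $b$ be the number of active rows and $a$ the number of active columns. A cell is then determined by its row and column. If both are active, its four sides carry halfedges and it is \VI. If exactly one is active, it is \III. If neither is active, it is \I. Counting gives $\nVI=ab$, $\nIII=ah+bw-2ab$ and $\nI=(w-a)(h-b)$, which is the claimed formula. Conversely, given $a$ and $b$, we activate any $a$ columns and $b$ rows and place the tiles accordingly. The boundary imposes no constraint, so this is a valid tiling. The positivity requirement $a,b>0$ follows from $\nVI>0$, since the class has all three types present. We need not handle the degenerate cases $a=w$ or $b=h$ separately: if they make some other count vanish, the instance simply lies in a different class, and the equivalence still holds.

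For the algorithm, note that $ab=\nVI$, and the second equation gives $ah+bw=\nIII+2\nVI=:S$. Multiplying through, $a$ satisfies the quadratic $ha^2-Sa+w\nVI=0$. We therefore solve it with the quadratic formula. This needs only an integer square-root test on the discriminant $S^2-4hw\nVI$, which is polynomial in the bit length, for example by binary search. We then check the at most two candidate roots for integrality and for $0<a\le w$. For each candidate we set $b=\nVI/a$, check that $b$ is an integer with $0<b\le h$, and verify all three equations. The first equation is automatic once $wh=\sum n_i$.

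The main obstacle is making sure we avoid factoring $\nVI$: the number of divisors can be large, and factoring is not known to be polynomial. The quadratic reduction sidesteps this, so the only delicate step is the row/column propagation argument. That argument relies on \III being the straight tile rather than the corner tile, which we confirm from the figure and the $5\times5$ example.
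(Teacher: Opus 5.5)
Your proof is correct and follows essentially the same route as the paper: halfedge propagation along rows and columns gives $a$ active columns and $b$ active rows whose crossings are exactly the $\VI$-tiles, which yields the same counts and the same converse construction. Your explicit reduction to the quadratic $ha^2-(n_3+2n_6)a+wn_6=0$ with an integer square-root test sharpens the paper's terse ``compute the candidates for $b$ and $a$'' step, which as written does not say how to avoid iterating over all $a\le w$ (that would only be pseudo-polynomial).
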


\begin{proof}
Assume we have a tiling. A $\VI$-tile forces propagation in both directions:
its row must consist only of $\VI$-tiles and horizontally oriented
$\III$-tiles, while its column must consist only of $\VI$-tiles and vertically
oriented $\III$-tiles. Similarly, any horizontally, respectively vertically,
oriented $\III$-tile forces the same structure along its entire row,
respectively column. Thus every occurrence of a $\VI$-tile or an oriented
$\III$-tile induces a straight line segment spanning the corresponding row or
column.

Thus, in the tiling, we end up with $a$ columns and $b$ rows for $a < w, b < h$ that have such a straight line through the whole row or column, and the crossings between the horizontal straight lines and vertical straight lines are exactly the $\VI$-tiles.
In those $a$ columns and $b$ rows, there are $ah+bw-ab$ tiles, exactly $ab$ of which are of type $\VI$.
Thus $ah + bw - 2ab$ are of type $\III$.
The remaining $wh - ah - bw + ab$ are $\I$-tiles; we remark that $a,b>0$ since there is at least one $\VI$-tile.
Thus, if the multiplicities cannot be represented this way with integers $a < w$ and $b < h$, there cannot be a tiling.
If they can be represented  this way, an arbitrary arrangement of $b$ filled rows and $a$ filled columns produces a valid tiling.

Determining whether such $a$ and $b$ exist is simple:
With $n_6=ab$ we have $n_3=ah+bw-2n_6$, which yields $b=(n_3-ah+2n_6)/w$. 
Given $w, h, \nI, \nIII, \nVI$, we can thus compute the candidates for $b$ and $a$, and check their integrality, whether they correctly produce $n_1=wh-ah-bw+ab$, and whether they are within our bounds.

% by $\nI = wh - ah - bw + \nVI \Leftrightarrow a = (- b w + h w - n_{1} + n_{6})/h$ and $\nVI = ab$, we find two solutions for $b$:
% \[b = \frac{h w - n_{1} + n_{6} \pm \sqrt{h^{2} w^{2} - 2 h n_{1} w - 2 h n_{6} w + n_{1}^{2} - 2 n_{1} n_{6} + n_{6}^{2}}}{2 w}.\]
% Given $w, h, \nI, \nIII, \nVI$, we can thus compute the candidates for $b$ and $a$ if they are real, check their integrality, whether they correctly produce the given $\nIII$ and whether they are within our bounds.
\end{proof}

%\vspace{1em} \noindent {\bf value:} \P{}\\

\subsection{\texorpdfstring{$\langle \TI, \TIV, \TVI \rangle$}
{<I,IV,VI>}}
%The following theorem allows us to decide compatibility in polynomial time for this class.
\begin{theorem}
A perfect rectangular tiling of class $\langle \TI, \TIV, \TVI \rangle $ exists
%    If $\nI, \nIV, \nVI > 0$ and $\nII,\nIII,\nV = 0$, a valid tiling is possible 
    iff $\nIV \geq w$ or $\min\{\nI,\nVI\} \leq \nIV(\nIV-1)/2$.
\end{theorem}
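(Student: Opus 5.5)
The plan is to use the halfedge picture. Type $\I$ has no halfedges, type $\VI$ has all four, and type $\IV$ is the corner tile with one horizontal and one vertical halfedge. As in \cref{lem:type4-constrained-rect}, the horizontal and vertical halfedges of a $\IV$-tile can be oriented independently. Since $\I$ and $\VI$ can never be adjacent, the $\IV$-tiles have to separate the $\I$-region from the $\VI$-region, so the proof splits into two constructions for sufficiency and a separation/counting argument for necessity.

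\emph{Sufficiency when $\nIV \ge w$.}
\begin{itemize}
\item Place the $\I$-tiles in reading order, then $w$ of the $\IV$-tiles as a monotone "staircase" that finishes the partial row of $\I$-tiles and steps down one cell to continue below it. Every $\IV$ in the staircase points its vertical halfedge down and pairs its horizontal halfedges with a neighbour, so that it separates the $\I$-cells above from the rest.
\item Since the rectangle boundary imposes no constraints, a leftover unpaired horizontal halfedge can point into the left or right side of the rectangle.
\item Fill the remaining region with the $\VI$-tiles. Put the surplus $\IV$-tiles on the outer boundary of the $\VI$-region, or the outer boundary of the $\I$-region pointing outward, in horizontally paired couples. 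A single left-over $\IV$ goes into a rectangle corner, pointing out of both sides.
\end{itemize}

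\emph{Sufficiency when $\min\{\nI,\nVI\} \le \nIV(\nIV-1)/2$.} By duality assume $\nVI$ is the minimum.
\begin{itemize}
\item Place the $\VI$-tiles as a Young-diagram shape in the top-left corner of the rectangle.
\item Close it off by a staircase of $k=\nIV$ $\IV$-tiles running from the top side to the left side. A diagonal of $k$ corner tiles encloses a triangle of $k(k-1)/2$ cells, so any $\nVI \le k(k-1)/2$ should be realizable by a suitable Young diagram bounded by exactly $k$ corners.
\item If the shape needs fewer corners, park the spare $\IV$-tiles on the rectangle boundary inside the $\I$-region, as in the first construction.
\end{itemize}
I would verify the "exactly $k$ corners for every count up to $k(k-1)/2$" claim by induction on $k$, shrinking the last step of the staircase.

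\emph{Necessity.} Suppose $\nIV < w \le h$.
\begin{enumerate}
\item Some row and some column contain no $\IV$-tile. Each such line is entirely $\I$ or entirely $\VI$, because $\I$ and $\VI$ are never adjacent. A free row and a free column intersect, so all free lines share one type; by duality say $\I$.
\item The halfedges then form maximal straight horizontal and vertical segments. Every $\VI$-tile is a crossing of one horizontal and one vertical segment. A segment cannot cross a free $\I$-line, so it terminates either at a $\IV$-tile or at the rectangle boundary.
\item Each $\IV$-tile terminates exactly one horizontal and one vertical segment. The curves of segments joined at corners are monotone staircases confined to a quadrant cut off by the free lines.
\end{enumerate}
The goal is to show that the number of crossings is at most $\nIV(\nIV-1)/2$. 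The intended argument is to order the corners along these monotone curves and charge each crossing to a pair of distinct corners: the corner ending its horizontal segment and the corner ending its vertical one. Two segments meeting at the same corner do not cross, which is what should give $\binom{\nIV}{2}$ rather than $\nIV^2$.

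I expect this crossing bound to be the main obstacle. Segments may end on the boundary instead of at a corner, and the $\IV$-tiles may form several components, so the charging has to be made injective with some care. I would first try a planarity or monotonicity argument: since the $\IV$-tiles separate an $\I$-quadrant from the $\VI$-cells, reading them in order along the separating curve should give the required injection.
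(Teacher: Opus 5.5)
\paragraph{Main gap: the crossing bound in the necessity direction.}
The necessity part stops exactly at the decisive step, and the reason you give for the factor $1/2$ does not supply it. Your setup is fine: free lines exist and, by duality, are all \I; every \VI-tile is the crossing of a maximal horizontal and a maximal vertical segment; every \IV-tile ends exactly one of each.

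The problem is that ``two segments meeting at the same corner do not cross'' only excludes pairs $(t,t)$. Charging a crossing to an end-corner of its row segment and one of its column segment is injective into \emph{ordered} pairs, which gives only $\nVI\le\nIV(\nIV-1)$. With freely chosen ends it is not injective on unordered pairs. Here is an example in a $10\times 10$ rectangle, with coordinates (column,\,row) counted from the top left:
\begin{itemize}
\item \IV-tiles $t_1=(2,2)$ pointing right/down and $t_2=(3,3)$ pointing left/up;
\item \VI-tiles at $(3,2)$ and $(2,3)$;
\item further \IV-tiles at $(3,1),(4,1),(4,2),(1,3),(1,4),(2,4)$;
\item \I-tiles everywhere else.
\end{itemize}
Charge $(3,2)$ via the left end of its row segment and the bottom end of its column segment, and $(2,3)$ via the right and top ends. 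Both crossings are then sent to $\{t_1,t_2\}$. Moreover, all eight \IV-tiles lie on one closed curve crossing itself twice. So your claim that corner-joined curves are monotone staircases confined to a quadrant is also false, and ordering corners along such curves is not available.

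\paragraph{How to close it.}
The missing idea is to fix the charged ends by direction, block by block.
\begin{itemize}
\item The free \I-lines cut the rectangle into blocks, and each segment lies in one block.
\item Since a free row and a free column exist, no block reaches two opposite sides of the rectangle. So for each block you can choose a horizontal and a vertical direction it does not reach, say right and down.
\item Then every horizontal segment of the block ends on the right in a \IV-tile: a \VI-tile there would force its right neighbor, which exists, into the segment. Likewise every vertical segment ends at the bottom in a \IV-tile.
\item Charge a \VI-tile $v$ to these ends $R(v)$ and $D(v)$. Then $R(v)$ lies strictly right of and strictly above $D(v)$, and $v$ is the cell in the row of $R(v)$ and the column of $D(v)$.
\end{itemize}
Hence an unordered pair of \IV-tiles of the same block determines $v$, and $\nVI\le\sum_B k_B(k_B-1)/2\le\nIV(\nIV-1)/2$. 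With this step your argument becomes an actual proof. That is more than the paper gives: it only asserts that no separation scheme encloses more than the corner triangle.

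\paragraph{Two smaller issues in the sufficiency part.}
\begin{itemize}
\item \emph{Surplus \IV-tiles when $\nIV\ge w$.} The surplus $\nIV-w$ can be as large as $wh-w-2$ (take $\nI=\nVI=1$). That is far more than fits in couples along region boundaries. Instead, place the \VI-tiles in reading order right after your staircase, then the remaining \IV-tiles. These see constraints only from above and from the left, so \cref{lem:type4-constrained-rect} finishes the tiling. This is the paper's construction with \I and \VI exchanged.
\item \emph{Realizability in the corner case.} \IV-tiles along a flat stretch of the diagram's boundary must pair up, which imposes parity conditions on the steps. Flattening only the two ends of the staircase gives exactly the areas $\nIV(\nIV-1)/2-a(a-1)/2-b(b-1)/2$. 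For $\nIV=8$ this misses $23$, and since $23>21$, parking a spare tile does not help. Row lengths $(5,4,4,4,3,2,1)$ do realize $23$ with eight \IV-tiles. So your induction has to allow interior flats of odd length.
\end{itemize}
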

\begin{proof}
Tiles of type $\I$ and $\VI$ can never be directly adjacent, so they must be separated by a layer of $\IV$-tiles, which
%This separating layer 
must be at least one tile ``thick'';
see \cref{fig:i-iv-vi}.
Also recall that $w \leq h$.

\begin{figure}[htbp]
    \centering
    \includegraphics[width=.6\textwidth]{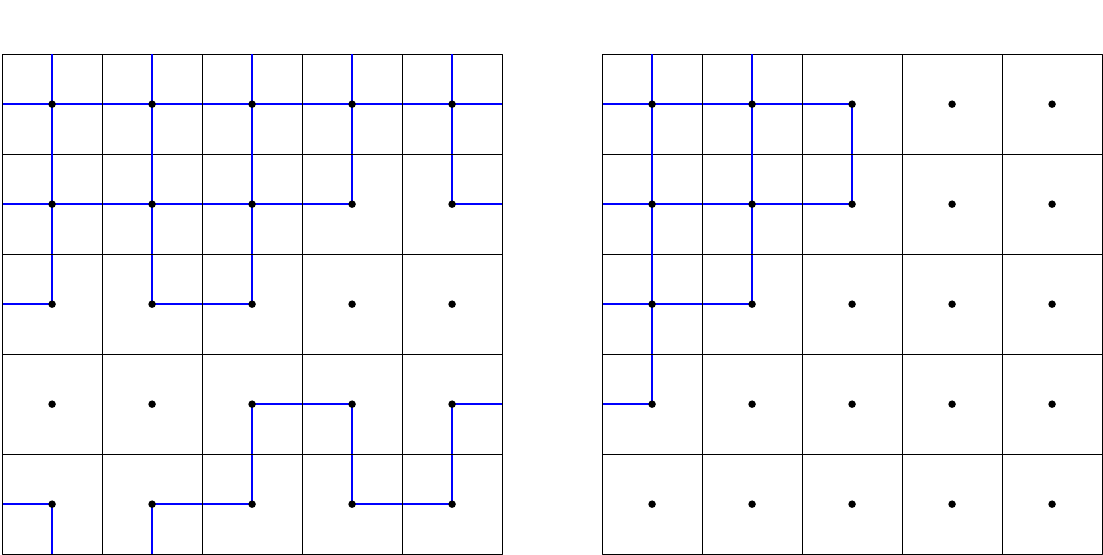}
    \caption{Left: placement of tiles if $\nIV \geq w$. Right: separating a triangular region 
    %(one tile less than the maximum achievable) 
    by $\nIV$-tiles.}
    \label{fig:i-iv-vi}
\end{figure}

If $\nIV \geq w$, we place the tiles as follows.
We first place all $\VI$-tiles in reading order until they run out.
Then, we place exactly $w$ $\IV$-tiles, with their half-edges always connecting to the $\VI$-tiles.
Next, we place all $\I$-tiles until they run out, fill any partially filled row with \IV-tiles and apply \cref{lem:type4-constrained-rect} to the remaining rows.

If $\nIV < w$, we can only separate a small triangular region, e.g., in the top-left corner; this region can either contain the $\VI$-tiles or the $\I$-tiles.
W.l.o.g., assume $\nVI \leq \nI$, so the triangular region should contain the \VI-tiles.
Given $\nIV$ \IV-tiles, we can separate a triangular area of at most $A = (\nIV-1) + (\nIV - 2) + \cdots = \frac{\nIV(\nIV-1)}{2}$ tiles.
If $\nVI \leq A$, we can achieve a tiling; if $\nVI < A$, we can shrink the separated area as necessary by shortening its longest rows first.
No other separation scheme allows us to separate more area given the same number of $\IV$-tiles; thus, if $\nVI > A$, no tiling is possible.
\end{proof}

%\subsection{P cases with four tile types}

\section {Pseudopolynomial Classes}\label{sec:pseudoP}

Interestingly, in some cases, no polynomial-time algorithm appears to exist, because tiles must be clustered in specific patterns. In these cases, we present pseudo-polynomial-time algorithms and argue that this may be necessary due to connections to integer factorization.
We focus on classes
$\langle \I, \II, \VI \rangle \hat= \langle \I ,\V, \VI \rangle$ and 
$\langle \I, \II, \IV, \VI \rangle \hat= \langle \I, \IV, \V, \VI \rangle$ below. For full proof details as well as case $\langle \I, \II, \V, \VI \rangle$ please see the appendix.

\subsection{\texorpdfstring{$\langle \TI, \TII, \TVI \rangle \hat= \langle \TI ,\TV, \TVI \rangle$}{<I,II,VI>=<I,V,VI>}} 
%The only remaining case with three tile types is $\langle \TI, \TII, \TVI \rangle \hat= \langle \TI ,\TV, \TVI \rangle$.
This case is related to interesting and natural number-theoretic questions,
which are, to the best of our knowledge, still open.
The essential observation is that $\I$- and $\VI$-tiles must not be adjacent and must thus be separated by a \emph{frame} of $\II$-tiles.
Simultaneously, $\II$-tiles can only separate $\VI$-tiles from $\I$-tiles if the $\VI$-tiles form a rectangle: if a tile is adjacent to $\VI$-tiles on more than one side, it must itself contain a $\VI$-tile and thus, no shape formed by $\VI$-tiles can have a reflex corner.
We begin our discussion 
%of this case 
by a few sufficient conditions that guarantee that a tiling is possible.
\begin{lemma}
    %Let $n = \min\{w,h\}$.
    %, $\nI, \nII, \nVI > 0$ and $\nIII, \nIV, \nV = 0$.
    Then, $\nII \geq 2w$ guarantees that a perfect rectangular tiling exists for class $\langle \TI, \TII, \TVI \rangle$.
    \label{lem:2n-2-tiles-suffice}
\end{lemma}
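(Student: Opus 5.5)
The plan is an explicit construction. Throughout, I read the tile types as follows: $\I$ has no halfedge, $\II$ has exactly one, and $\VI$ has all four. This reading is consistent with the earlier results, for example $\langle \I,\VI\rangle$ being a NO class. So a $\II$-tile borders $\I$-tiles on all sides except the one it points to, and it may point to a $\VI$-tile, to another $\II$-tile pointing back at it, or out of the rectangle, since the boundary imposes no constraints. Write $\nVI = kw + t$ with $0 \le t < w$. The idea is to pack the $\VI$-tiles into at most two rectangles, each touching the boundary of the rectangle, so that each needs a frame on only one or two sides. Together these frames use at most $2w$ $\II$-tiles.

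\textbf{Step 1: place and frame the $\VI$-tiles.} Put the $k$ full rows of $\VI$-tiles at the top of the rectangle. Directly below them, put one row of $w$ $\II$-tiles pointing up; this row is omitted if $k=0$. Put the remaining $t$ $\VI$-tiles as a $1\times t$ block in the bottom-left corner of the rectangle. Frame this block with $t$ $\II$-tiles directly above it pointing down, and one $\II$-tile to its right pointing left; this frame is omitted if $t=0$. The block does not need to be a single row; the point is that it is a rectangle, so it has no reflex corner, which is exactly the obstruction noted before the lemma. The frames use at most $w + (t+1) \le 2w \le \nII$ tiles. Where the two frames touch, the shared edges carry no halfedges on either side, so the frames are compatible with each other.

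\textbf{Step 2: place the surplus $\II$-tiles and the $\I$-tiles.} Fill the region between the two structures in reading order with the remaining $\nII - (\text{frame size})$ $\II$-tiles, arranged as horizontal dominoes whose tiles point at each other. If the surplus is odd, put the single leftover $\II$-tile at the left or right boundary, pointing out of the rectangle. Fill everything else with $\I$-tiles. Every non-boundary edge is then either matched by a domino partner or a $\VI$-tile, or has no halfedge on either side. So the only thing left to verify is that the geometry fits.

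\textbf{Main obstacle: degenerate cases where the structures collide.} I expect the main work to be the cases where the pieces run into each other or there is little room. One such case is when the $k$ full rows plus their frame row come close to the bottom $\VI$ block. I would handle it using $wh = \nI + \nII + \nVI \ge kw + t + 2w$, which gives $h \ge k+2$ (and $h\ge k+3$ when $t>0$ and the frames are both present), so the top frame row and the bottom block with its frame fit in disjoint rows. Another case is when $\VI$-tiles reach the boundary on their own, for example $k = h$ or $k=h-1$; then the corresponding frame is unnecessary and the saved $\II$-tiles simply join the surplus. A third case is when the surplus dominoes reach a row end or a frame; here I would place an unfinished domino vertically, or end with the boundary-pointing single tile. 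For each of these cases I would give a short separate check rather than a uniform argument.
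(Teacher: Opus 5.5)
Your construction is correct and is essentially the paper's (mirrored top to bottom). Both put the $k$ full rows of $\VI$-tiles against one side behind a row of $w$ $\II$-tiles, place the leftover $t<w$ $\VI$-tiles as a partial row in a corner on the opposite side framed by $t+1\le w$ $\II$-tiles, and use $\nII\ge 2w$ to keep the two frames disjoint; your explicit bound $h\ge k+3$ and your placement of surplus $\II$-tiles as dominoes or boundary-pointing singles fill in details the paper leaves implicit, though your degenerate cases $k\in\{h-1,h\}$ cannot occur, since $\nI>0$ and $\nII\ge 2w$ already force $k\le h-3$.
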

\begin{proof}
   Recall that $w \leq h$.
   To obtain a tiling, we build full rows of $\VI$-tiles at the bottom of our rectangle until fewer than $w$ tiles remain, separating the full rows by a row of $w$ $\II$-tiles.
   At most $w-1$ $\VI$-tiles remain, which can be placed in the topmost row and surrounded by at most %$(w-1) + 1$ 
   $w$ $\II$-tiles to separate them.
   $\nII \geq 2w$ guarantees that at least two full rows can be filled with $\II$-tiles, ensuring there will be no overlap between the $\II$-tiles used to separate the full $\VI$-tile rows at the bottom and the $\II$-tiles separating the partially filled row at the top.
\end{proof}

Furthermore, for any fixed $w, h$ and $\nVI$, replacing $\I$-tiles by $\II$-tiles never hurts. See the appendix for the proof.
%\begin{lemma}
\begin{restatable}{lemma}{lemmintwoopt}
Let $w, h, \nI, \nII, \nVI > 0$ such that a perfect rectangular tiling exists for class $\langle \TI, \TII, \TVI \rangle$.
    Let $\nII' \geq \nII$ and $\nI' = \nI - (\nII'-\nII) > 0$.
    Then there also exists a perfect rectangular tiling for $w, h, \nI', \nII', \nVI$.
    \label{lem:min-2-is-optimal}
\end{restatable}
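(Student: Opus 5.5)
By induction on $\nII'-\nII$, it suffices to treat the case $\nII'=\nII+1$ and $\nI'=\nI-1\ge 1$, so $\nI\ge 2$. Applying this single step repeatedly gives the general statement, since the hypothesis $\nI'>0$ is preserved at every intermediate step. So we start from a valid tiling $T$ for $(w,h,\nI,\nII,\nVI)$ and convert exactly one $\I$-tile into a $\II$-tile, modifying $T$ locally where needed.

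We use the structure already noted in this section. The $\VI$-tiles form disjoint rectangles, since no tile can be adjacent to $\VI$-tiles on two sides unless it is itself a $\VI$-tile. Every non-$\VI$ neighbour of such a rectangle is a $\II$-tile pointing into it. Every other $\II$-tile has its single half-edge pointing either at the rectangle boundary or at another $\II$-tile pointing back, forming a $\II$--$\II$ domino. Every side of an $\I$-tile faces either an $\I$-tile, the boundary, or a $\II$-tile whose half-edge points elsewhere.

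\emph{Case 1: some $\I$-tile $X$ lies on the boundary of $R$.} Replace $X$ by a $\II$-tile pointing out of $R$. Its other three sides are empty, and they face tiles whose facing sides were already empty, so the tiling stays valid and we are done.

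\emph{Case 2: no $\I$-tile lies on the boundary.} Choose $X$ as the leftmost $\I$-tile in the topmost row that contains $\I$-tiles. Then the cell $Y$ above $X$ exists and is a $\II$-tile whose half-edge points up, left or right. Here we use a ``sliding'' move along the column (or row) from $X$ to the boundary. We walk upward from $X$ through $Y$ and beyond until we reach either the boundary or a $\II$-tile $Z$ that belongs to a domino or points at the boundary. Along the way we re-orient the $\II$-tiles so that each one points towards its predecessor, and $X$ becomes a $\II$-tile paired with $Y$. If the chain ends at a domino, its partner is freed and becomes... but that partner would then be a new $\I$-tile, which spoils the count, so in that situation we must instead end the chain at the boundary. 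For this to work we need a $\II$-path from $Y$ to the boundary that avoids the $\VI$-rectangles and their frames. I expect this to follow from the fact that the $\I$-region is enclosed only by $\II$-tiles. If the path meets a frame of a $\VI$-rectangle, we instead shift the entire $\VI$-rectangle together with its frame by one cell. This relocates one frame tile and lets one cell previously occupied by the frame turn into a boundary-adjacent $\II$-tile.

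The main obstacle is Case 2, specifically ensuring that the local re-orientation changes exactly one $\I$-tile into a $\II$-tile and creates no new $\I$-tile and no mismatched edges. I would first try the simplest version. If $Y$ points to the boundary or to a domino partner $Z$, re-pair so that $Y$--$X$ form a domino and $Z$ points to the boundary (possible when $Z$ is on the boundary). If $Z$ is not on the boundary, iterate along a chain of dominoes toward the boundary, using the choice of $X$ as topmost-leftmost so that this chain avoids all $\I$-tiles. Only if this fails would I fall back on translating a framed $\VI$-rectangle toward the boundary.
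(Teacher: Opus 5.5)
\textbf{There is a genuine gap in your Case~2.} Your reduction to a single step and your Case~1 match the paper's proof, but Case~2 is not a proof, and the local re-orientation you propose cannot work in general.

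Take $w=5$, $h=7$, with cells indexed as (column,row):
\begin{itemize}
\item rows $1$ and $7$ are $\VI$-tiles;
\item rows $2$ and $6$ are $\II$-tiles pointing into rows $1$ and $7$, respectively;
\item the cells $(1,4)$ and $(5,4)$ are $\VI$-tiles, framed by $\II$-tiles at $(1,3),(1,5),(2,4)$ and at $(5,3),(5,5),(4,4)$;
\item the remaining seven cells, all in columns $2$--$4$ and rows $3$--$5$, are $\I$-tiles.
\end{itemize}
This is a valid tiling with $n_1=7$, $n_2=16$, $n_6=12$, and no $\I$-tile on the boundary. Every $\II$-tile is adjacent to a $\VI$-tile and is therefore forced to point into it. So as long as the $\VI$-tiles stay put, all $16$ frame cells are frozen. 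The seven free cells are interior, and none of them can become a $\II$-tile. In particular, there is no domino chain and no re-orientable boundary $\II$-tile. Your expectation that a $\II$-path from $Y$ to the boundary avoiding frames exists is false here: your chosen $X=(2,3)$ sits directly below the frame tile $(2,2)$.

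Any repair must therefore move $\VI$-tiles. That is exactly your fallback of shifting a framed rectangle, which you leave unspecified. You do not control how many cells change type under such a shift. You do not rule out collisions with other frames or with blocks of full rows. You do not show that the net effect is exactly one $\I$-to-$\II$ conversion.

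\textbf{What the paper does instead.} The paper needs no local surgery in this case. It first disposes of $n_2' \geq 2w$ via \cref{lem:2n-2-tiles-suffice}, which builds a fresh tiling from scratch, so it may assume $n_2 < 2w-1$. It then shows that your Case~2 cannot occur under this assumption, by a counting argument:
\begin{itemize}
\item If no $\I$-tile touches the boundary, every boundary cell is a $\II$- or $\VI$-tile.
\item Hence each column contains at least two $\II$-tiles: either its two end cells, or the frame tiles directly below or above $\VI$-rectangles touching the top or bottom.
\item The only exception is a column formed by a $\VI$-rectangle of width $x$ and height $h-1$ plus a single $\II$-tile. There is at most one such rectangle. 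Its frame supplies $x+h$ $\II$-tiles over its $x+1$ columns, which still averages two per column unless $h \leq x+1$. That forces $h=w=x+1$ and leaves no room for $\I$-tiles, contradicting $n_1>0$.
\item Full rows or columns of $\VI$-tiles are handled by similar counting.
\end{itemize}
This gives $n_2 \geq 2w-1$, hence $n_2' \geq 2w$, and \cref{lem:2n-2-tiles-suffice} applies. This is consistent with the example above, where $n_2 = 16 \geq 2w$.

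The missing idea in your proposal is this counting step combined with the global construction. You never invoke \cref{lem:2n-2-tiles-suffice}, and without it Case~2 remains open.
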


Therefore, we can rephrase the perfect rectangular tiling problem as follows:
given $w, h$ and $\nVI$, subdivide $\nVI$ into integer rectangles $\nVI = \sum_i a_i b_i$ with $0 < a_i \leq w$, $0 < b_i \leq h$ such that the exposed circumference, i.e., the number $\nII$ of necessary $\II$-tiles, is minimized.
The exposed circumference of each rectangle is $w$ if $a_i = w$ (full rows),
$h$ if $b_i = h$ (full columns), $a_i + b_i$ for other rectangles placed in a corner, $a_i + 2b_i$ or $2a_i + b_i$ for other rectangles placed on the boundary and $2a_i + 2b_i$ otherwise.
We make the following observation.
\begin{lemma}
    Let $\nVI = kw + r$, $\nVI = \ell h + s$ with $k,\ell \in \mathbb{N}_0$, $r < w$ and $s < h$.
    Then we need at least 
    $\nII \geq \min\left\{
        w + 2\left\lceil\sqrt{r}\right\rceil,
        h + 2\left\lceil\sqrt{s}\right\rceil,
        2\left\lceil\sqrt{n_6}\right\rceil
    \right\}$
    tiles of type $\II$ in any perfect rectangular tiling for class $\langle \TI, \TII, \TVI \rangle$.
    \label{lem:valid-lower-bounds}
\end{lemma}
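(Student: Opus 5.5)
We work with the structure described just before the statement. In any tiling of class $\langle \TI, \TII, \TVI \rangle$ the $\VI$-tiles form a disjoint union of axis-parallel rectangles $R_1,\dots,R_m$, with $\nVI=\sum_i a_ib_i$. Every cell side of such a rectangle that faces the interior of the big rectangle must be covered by a $\II$-tile. A $\II$-tile has exactly one halfedge. So a single $\II$-tile cannot serve two $\VI$-neighbours: it would need two halfedges. It also cannot serve two different sides of one $\VI$-rectangle. Hence $\nII$ is at least the total number of exposed cell sides. That number is exactly the exposed circumference listed above: $w$ for full rows, $h$ for full columns, $a_i+b_i$ for a corner rectangle, and so on. The plan is a case analysis on the shape of the $\VI$-rectangles. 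In each case we bound the total exposed circumference from below by one of the three terms.

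\textbf{Case 1: no $\VI$-rectangle touches two opposite sides of the big rectangle.} Each rectangle then has exposed circumference at least $a_i+b_i$. Since $a_i+b_i\ge 2\sqrt{a_ib_i}$ and square roots are subadditive,
\[
\sum_i (a_i+b_i)\ \ge\ 2\sum_i\sqrt{a_ib_i}\ \ge\ 2\sqrt{\nVI}.
\]
Integrality gives $2\lceil\sqrt{\nVI}\rceil$. The subtle point is that $2\sqrt{n}$ rounds up to $2\lceil\sqrt n\rceil$ only under some care. I would use the fact that the minimum of $a+b$ over integers with $ab\ge n$ is $\lceil 2\sqrt n\rceil$, which can be $2\lceil\sqrt n\rceil-1$. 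This is an obstacle. To fix the constant, I would argue with the bounding box of the union of the rectangles instead. If the bounding box of the $\VI$-cells is $A\times B$, then the exposed $\II$-tiles project onto at least $A$ columns and $B$ rows, and each side of the bounding box is covered, unless it lies on the boundary.

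\textbf{Case 2: some $\VI$-rectangle spans the full width, so $a_i=w$.} If there are also full columns, everything is crossed and the structure degenerates. One checks that the bound $w$ or $h$ plus the rest still holds. Otherwise the full-row blocks contribute at least $w$. The remaining $\VI$-cells do not form full rows. Their count is congruent to $r$ modulo $w$, so it is at least $r$. If $r>0$, these cells form rectangles not spanning the width. By the Case 1 argument applied to them, they contribute at least $2\lceil\sqrt{r}\rceil$ further $\II$-tiles. These tiles are disjoint from the ones counted for the full rows, since each $\II$-tile serves one $\VI$-side. The total is at least $w+2\lceil\sqrt r\rceil$. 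Full columns are handled symmetrically and give $h+2\lceil\sqrt s\rceil$. Taking the minimum over the cases yields the claim.

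\textbf{Main obstacle.} I expect the hardest step to be the precise integrality: getting $2\lceil\sqrt{\cdot}\rceil$ rather than $\lceil 2\sqrt{\cdot}\rceil$. Related to this is the accounting for rectangles lying in corners, whose exposed circumference is only $a+b$. My first attempt would be to check whether $\lceil 2\sqrt n\rceil$ already suffices for the intended use. If not, I would refine the bound by a parity argument on row and column projections: the $\II$-tiles above and below the $\VI$-cells occupy the same columns, and those to the left and right occupy the same rows.
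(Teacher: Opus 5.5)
You follow the paper's own route: a split into full rows, full columns, or neither; a charge of at least $a_i+b_i$ exposed sides for each $\VI$-rectangle spanning neither the full width nor the full height; disjointness because a $\II$-tile has a single halfedge; and AM--GM plus subadditivity (concavity) of $\sqrt{\cdot}$. The step you flag as the main obstacle is a genuine gap, and it cannot be closed, because the lemma as stated is false. The chain $\sum_i(a_i+b_i)\ge 2\sum_i\sqrt{a_ib_i}\ge 2\sqrt{n_6}$ gives only $\lceil 2\sqrt{n_6}\rceil$.

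Here is a counterexample. Take $w=h=3$, $n_6=2$, $n_2=3$, $n_1=4$. Put the two $\VI$-tiles side by side in the top-left corner. Put $\II$-tiles pointing up in the two cells below them, and a $\II$-tile pointing left in the cell to their right. Put $\I$-tiles in the remaining four cells. All edges match, yet $k=\ell=0$ and $r=s=2$, so the lemma demands $n_2\ge\min\{7,7,4\}=4$. Similarly, for $w=h=10$ a $2\times 3$ block of $\VI$-tiles in a corner needs only $5$ $\II$-tiles, while the lemma demands $2\lceil\sqrt 6\rceil=6$. The paper itself uses $n_2=3$ for $n_6=2$ in the proof of Lemma~\ref{lem:valid-upper-bounds}. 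The paper's one-line proof asserts $2\lceil\sqrt r\rceil$ directly from AM--GM and concavity, so it has exactly the hole you found. Your proposed bounding-box and parity refinements therefore cannot succeed: no argument proves a false bound.

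What your argument does prove is the corrected bound
$n_2\ge\min\{w+\lceil 2\sqrt r\rceil,\ h+\lceil 2\sqrt s\rceil,\ \lceil 2\sqrt{n_6}\rceil\}$,
and you should state and prove that instead. One small repair is needed in your Case 2. Where you say that when full rows and full columns of $\VI$-tiles coexist ``one checks that the bound still holds'', note instead that this cannot happen. A full row and a full column of $\VI$-tiles intersect, so they lie in a single rectangular $\VI$-component, which must then be the whole grid, contradicting $n_1>0$. The leftover count $r+jw\ge r$ in Case 2 enters $\lceil 2\sqrt{\cdot}\rceil$ monotonically. With these points, your Cases 1 and 2 and the symmetric column case give the corrected lemma.
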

\begin{proof}
    We consider three mutually exclusive and exhaustive cases: (1)~some of the $\VI$-tiles fill complete rows,
    (2)~some of the $\VI$-tiles fill complete columns and (3)~there is no complete column or row containing a $\VI$-tile.
    In case (1), even assuming an arbitrary number of corners being available instead of just $2$, by the arithmetic-geometric mean inequality and the  concavity of $\sqrt{\,\cdot\,}$, we need at least $2\lceil\sqrt{r}\rceil$ $\II$-tiles to handle the $\VI$-tiles that do not fit into full rows; an analogous argument works for $s$ in case (2) and all $n_6$ tiles in case (3).
\end{proof}

This lemma yields a lower bound of $2\lceil\sqrt{\nVI}\rceil$ if $\nII < \min\{w,h\}$ since options (1) and (2) are unavailable in this case.
On the other hand, we have the following sufficient condition.
\begin{lemma}
%    For any $w \times h$-rectangle and any $\nI, \nII, \nVI > 0$ with $\nIII, \nIV, \nV = 0$ and $\nI + \nII + \nVI = wh$,
If $\nII \geq 4\sqrt{\nVI}$, then there exists a perfect rectangular tiling for class $\langle \TI, \TII, \TVI \rangle$.
    \label{lem:valid-upper-bounds}
\end{lemma}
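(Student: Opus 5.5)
The plan is an explicit construction. We place the $\VI$-tiles as at most two axis-parallel rectangles tucked into corners of the $w\times h$ rectangle and frame them with few $\II$-tiles. We then absorb all surplus $\II$-tiles with \cref{lem:min-2-is-optimal}. Concretely, we will show there is a tiling with the given $\nVI$ that uses some number $\nII^* \le \nII$ of $\II$-tiles and $\nI + (\nII-\nII^*) > 0$ tiles of type $\I$. Applying \cref{lem:min-2-is-optimal} to this tiling, which converts $\I$-tiles into $\II$-tiles, then yields the required multiplicities. So it suffices to exhibit a placement whose exposed circumference is at most $4\sqrt{\nVI}$.

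First we dispose of the case where the near-square blocks would not fit. If $\sqrt{\nVI} \ge w/2$, then $\nII \ge 4\sqrt{\nVI} \ge 2w$, and \cref{lem:2n-2-tiles-suffice} already gives a tiling. So we may assume $\sqrt{\nVI} < w/2 \le h/2$, and every block below has side lengths well below $w$ and $h$.

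Now let $m = \lfloor\sqrt{\nVI}\rfloor \ge 1$ and write $\nVI = mb + r$ with $b = \lfloor \nVI/m \rfloor$ and $0 \le r < m$; since $\nVI < (m+1)^2$, we have $m \le b \le m+2$.
\begin{itemize}
\item Place an $m\times b$ block of $\VI$-tiles in the top-left corner. Its frame is the column of $b$ cells to its right plus the row of $m$ cells below it, so it costs $m+b$ $\II$-tiles, oriented to point into the block. The diagonal corner cell is not adjacent to the block and can hold an $\I$-tile.
\item If $r>0$, place a $1\times r$ strip of $\VI$-tiles in the bottom-left corner along the bottom row, framed by $r$ cells above it and one cell to its right, at cost $r+1 \le m$.
\end{itemize}
Because $b + 1 + 2 < h$ under our assumption, the two frames are disjoint and do not touch. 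The rest of the rectangle is filled with $\I$-tiles. This is valid because every $\II$-tile has its half-edge toward the $\VI$-region and empty sides toward $\I$-tiles or the boundary, and no $\I$-tile is adjacent to a $\VI$-tile.

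The total cost is at most $m + b + m \le 3m+2$, which is at most $4m \le 4\sqrt{\nVI}$ once $m \ge 2$. When $r=0$ the cost is $m+b \le 2m+2 \le 4m$ for $m\ge1$.

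The main obstacle is bookkeeping in small cases: $m=1$ with $r>0$ cannot occur since $r<m$, but we must still double-check tiny values of $\nVI$ and tiny rectangles. In particular we must verify $\nII^* \le \nII$ using the exact bound rather than a rounded one, and confirm that the leftover number of $\I$-tiles needed before applying \cref{lem:min-2-is-optimal} is positive. If the strip-in-the-bottom-corner layout causes trouble when $h$ is small, I would fall back to putting the remainder as an $r\times1$ column in the top-right corner, which costs $r+1$ as well.
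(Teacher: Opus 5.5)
Your proof is correct, but it takes a different route from the paper. For $\nVI < w^2/4$, the paper writes $\nVI = a_1^2+\dots+a_4^2$ by Lagrange's four-square theorem and puts one square in each of the four corners, at frame cost $2\sum_i a_i \le 4\sqrt{\nVI}$ by Cauchy--Schwarz. It uses Cauchy--Schwarz once more to get $a_i + a_j \le w-2$, so that the frames do not collide, and it checks $w \le 3$ by hand. The case $\nVI \ge w^2/4$ is handled by \cref{lem:2n-2-tiles-suffice}, exactly as you do.

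Your near-square decomposition $\nVI = mb + r$ needs only two corners and elementary floor arithmetic. It avoids the four-square theorem and the separate small-$w$ check, and it actually gives cost at most $3m+1$, comfortably below $4\sqrt{\nVI}$. You also make explicit the appeal to \cref{lem:min-2-is-optimal} for absorbing surplus $\II$-tiles, which the paper leaves implicit. Your check that the intermediate tiling has $\nI + (\nII - \nII^*) \ge \nI > 0$ $\I$-tiles is exactly the hypothesis that lemma needs.

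One small correction. Your claim $b+3 < h$ can fail: for $w = h = 5$ and $\nVI = 5$ you get $b = 2$ and $b+3 = h$, and then the two frames do touch. Only $b+3 \le h$ is needed. It makes rows $b+1$ and $h-1$ distinct, and if these rows are adjacent, the facing sides of the upward- and downward-pointing $\II$-tiles both carry no half-edge.

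To justify $b+3 \le h$, note that $r > 0$ forces $m \ge 2$, $b \le m+1$ and $\nVI \ge m^2+1$. Then $w^2 > 4\nVI$ gives $h \ge w \ge 2m+1$. For $m \ge 3$ this yields $2m+1 \ge m+4 \ge b+3$. For $m = 2$ the only cases are $\nVI = 5$ (with $b = 2$, $w \ge 5$) and $\nVI = 7$ (with $b = 3$, $w \ge 6$), and both satisfy $b+3 \le h$.
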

\begin{proof}
Recall that $w \leq h$.
    We analyze two cases, (a)~$\nVI < w^2/4$ and (b)~$\nVI \geq w^2/4$.

    First consider case (a).
    Lagrange's four-square theorem states that every integer $\nVI$ can be written as sum of four non-negative integer squares $\nVI = a_1^2 + \cdots + a_4^2$.
    We can place each square into a corner of our region if $a_i + a_j \leq w-2$ for each pair $i \neq j$, avoiding overlaps between the frames of $\II$-tiles surrounding each square.
    For any pair $i \neq j$, we have $a_i^2 + a_j^2 \leq \nVI < w^2/4$.
    By the Cauchy-Schwarz inequality, we obtain
    \[(a_i + a_j)^2 \leq 2(a_i^2 + a_j^2) < w^2/2 \implies a_i + a_j < w/\sqrt{2},\]
    which guarantees $a_i + a_j \leq w-2$ as soon as $w/\sqrt{2} \leq w - 1$, which holds for all $w \geq 4$.
    For $w \leq 3$, due to $\nVI < w^2/4 \leq 9/4$, we only need to consider $\nVI \in \{1, 2\}$; for $\nVI = 1$, $\nII = 2 \leq 4$ suffices, and for $\nVI = 2$, $\nII \leq 3 < 4\sqrt{2}$ suffices, so a valid tiling exists in these small cases.
    
    For $w \geq 4$, we can place each square in its own corner without overlap.
    The exposed circumference of these squares is $2\sum_{i=1}^4 a_i$.
    Using Cauchy-Schwarz again, we obtain \[\left(\sum_{i=1}^4 a_i\right)^2 \leq 4\sum_{i=1}^4a_i^2 = 4\nVI \implies 2\sum_{i=1}^4 a_i \leq 4\sqrt{\nVI},\]
    proving our bound in case (a).
    For case (b)~$\nVI \geq w^2/4$, we have $\sqrt{\nVI} \geq w/2$, so $4\sqrt{n_6} \geq 2w$; the claim follows directly from \cref{lem:2n-2-tiles-suffice}.
\end{proof}

We use these lemmas to prove the following theorem.
\begin{theorem}
For class $\langle \TI, \TII, \TVI \rangle$ there is a pseudo-polynomial algorithm that decides whether a perfect rectangular tiling exists.
\label{thm:126}
\end{theorem}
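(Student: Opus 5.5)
Our plan is to reduce the decision problem to computing the minimum number $N^*$ of $\II$-tiles needed to separate $\nVI$ $\VI$-tiles from the $\I$-tiles in a $w\times h$ rectangle. By \cref{lem:min-2-is-optimal}, a tiling exists iff $\nII \ge N^*$, provided $\nI = wh-\nII-\nVI>0$ remains positive. So it suffices to compute $N^*$ in time polynomial in $w+h+\nVI$.

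First we dispose of the easy ranges. If $\nII \ge 2w$, answer YES by \cref{lem:2n-2-tiles-suffice}; likewise if $\nII\ge 4\sqrt{\nVI}$ by \cref{lem:valid-upper-bounds}. Otherwise $\nII < \min\{2w, 4\sqrt{\nVI}\}$, and we need structural information about an optimal tiling. As observed above, the $\VI$-tiles form a disjoint union of axis-parallel rectangles. Each rectangle is surrounded by a frame of $\II$-tiles, and two frames cannot touch $\VI$-tiles of different rectangles. Its cost is $w$, $h$, $a+b$, $a+2b$ (or $2a+b$), or $2a+2b$, according to whether it spans a row, spans a column, sits in a corner, touches one side, or lies in the interior.

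Next we bound the number and kinds of pieces. At most four rectangles use corners. A rectangle of cost $2a+2b$ can be merged with another or pushed to the boundary without increasing cost, so interior rectangles can be assumed absent except in degenerate cases. Full rows or columns cost at least $w$. Since $\nII<2w$, at most one full-row or full-column strip occurs, and it can be taken as one block of stacked rows with a single separating line. Thus the candidate configurations are: an optional full block of $kw$ (or $\ell h$) tiles, plus at most four corner rectangles, plus some boundary rectangles of cost $2a+b$. We then run a dynamic program over the remaining count $m\le \nVI$ and the number of corners used ($0$ to $4$), storing the minimum cost. Its transitions choose a rectangle $a\times b$ with $ab\le m$, $a\le w$, $b\le h$, and add its cost. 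This is a knapsack over at most $\nVI\cdot\min(w,h)$ shapes, which is pseudo-polynomial and matches Problem~\ref{prob:mcd}. Taking the minimum over the choice of the full block (value $k$ or $\ell$) gives $N^*$.

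The main obstacle is \emph{geometric feasibility}. The DP only counts costs; we must also show that the chosen rectangles, with their frames, fit disjointly into the $w\times h$ rectangle. Our approach is to use the regime $\nII<\min\{2w,4\sqrt{\nVI}\}$. Here every piece has side length less than $2w$ in total cost, and the rectangles in opposite corners have combined extent bounded as in the Cauchy--Schwarz argument of \cref{lem:valid-upper-bounds}. From this we argue that frames in distinct corners and along sides do not collide. Where they would collide, we show that merging the colliding rectangles yields a configuration with no larger cost, which is again enumerated by the DP. Boundary rectangles of cost $2a+b$ can be placed consecutively along a long side, since their total length is less than $\nII<2w\le w+h$. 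Merging adjacent ones along a side never increases cost, so we may also restrict to at most a constant number of them. Verifying these exchange arguments carefully, and handling the small cases $w\le 3$ by brute force, completes the algorithm.
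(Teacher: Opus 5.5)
Your overall plan matches the paper's:
\begin{itemize}
\item reduce to the minimum number $n_2^*$ of $\II$-tiles via \cref{lem:min-2-is-optimal};
\item answer YES once $n_2\ge 2w$ by \cref{lem:2n-2-tiles-suffice};
\item split on whether there is a block of full rows or columns;
\item run a knapsack-style DP over the remaining count of $\VI$-tiles and the number of free corners.
\end{itemize}
The gap is the step you yourself flag as the main obstacle. You must show that the decomposition returned by the DP can actually be laid out, so that its value is attainable and not just a lower bound on $n_2^*$. The tools you propose for this do not work.

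\begin{itemize}
\item \emph{Merging.} Combining two colliding rectangles, or two adjacent side rectangles of different depths, does not give a rectangle. Every connected $\VI$-region must be a rectangle, because a reflex corner would force a $\VI$-tile there. So there is no merged configuration of the same area for the DP to have enumerated. For the same reason, your restriction to a constant number of side pieces is unjustified; it is also unnecessary.
\item \emph{Cauchy--Schwarz.} The extent bound in \cref{lem:valid-upper-bounds} concerns the four Lagrange squares when $n_6<w^2/4$. It says nothing about the rectangles the DP selects, which can be long and thin: a $1\times k$ corner strip costs only $k+1$.
\item \emph{Length count.} ``Total length $<2w\le w+h$'' does not show that the side pieces fit along one side. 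You must also account for their separating frame tiles and the space taken by the corner pieces.
\end{itemize}

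The idea you are missing is the paper's. Place the pieces canonically: sort by shorter side, put the longer side against the boundary, put the four pieces with the largest shorter side into the corners in a fixed portrait/landscape pattern, and put the rest along one side, then the opposite side. Then show that any overlap in this canonical placement by itself forces the decomposition's cost to be at least $2w$. For example, two portrait rectangles in opposite corners collide only if $a'_1+a'_2\ge w$ and $b'_1+b'_2\ge w$. Side pieces that fit neither along the top nor along the bottom force, on average, two $\II$-tiles per column. Since $n_2\ge 2w$ is already a YES instance, the DP value only has to be trusted below $2w$, where it is realizable; no exchange or merging argument is needed.

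Your DP also needs two corrections when a full block is present.
\begin{itemize}
\item \emph{Corner budget.} As described, the DP lets the remainder use four corners. But the separating line of $\II$-tiles points entirely at the block, so only the two far corners remain usable at cost $a+b$. The corner budget must drop to two (the paper uses $k\in\{0,1,2\}$).
\item \emph{Short-side bound.} The short side of the remaining pieces must be bounded, as in the paper's $B_m=w-c-2$.
\end{itemize}
Without these, the DP optimizes over configurations that cannot be realized.
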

\begin{proof}
    By \cref{lem:min-2-is-optimal}, we know that, given $n_6, w, h$, it suffices to find the minimum $n_2^*$ that admits a tiling to decide whether a given $n_2$ admits a tiling ($n_2 \geq n_2^*$) or not.
    Furthermore, 
    %w.l.o.g., assuming $w \leq h$ 
    we know by \cref{lem:2n-2-tiles-suffice} that $n_2 \geq 2w$ already implies a tiling.
    We can use the following algorithm to find $n_2^*$.
    We consider three cases: (1)~filling complete columns,
    (2)~complete rows, or (3)~neither complete rows nor columns.
    
    We first describe case (1); case (2) is symmetric.
    There are only polynomially many choices for the number of complete columns we fill, so we can try each choice $c$ individually.
    For each $c$, we have to find a decomposition of the remaining number of $\VI$-tiles $n_6' = n_6 - ch$ into rectangles that minimizes the number of $\II$-tiles needed.
    We can assume these rectangles to be placed with their longer side against the boundary opposite the block of complete columns, w.l.o.g. the right boundary, since that minimizes the number of $\II$-tiles needed.
    Note that we cannot run out of space for the optimal choice $c$ and the optimal decomposition: if the only way of optimally decomposing $n_6'$ requires a packing of rectangles that cannot be turned into one where all rectangles touch the right boundary with their longer side by rotating and moving rectangles, that is due to an overlap in vertical direction, which implies that each row, on average, contains at least $2$ $\II$-tiles and we thus have $n_2 \geq 2w$.
    
    Thus, as long as we can guarantee that the shorter side of each rectangle is bounded by $B_m = w - c - 2$ (such that it will not overlap with the complete columns), our goal is to subdivide the remaining number $n_6'$ of $\VI$-tiles into rectangles $a_1 \times b_1, \ldots, a_k \times b_k$ with $a_i \geq b_i$, $b_i \leq B_m$ and $b_i \leq b_{i-1}$ for all $i$ such that $a_1 + b_1 + a_2 + b_2 + \sum_{i=3}^k a_i + 2b_i$ is minimized; note that placing the first two rectangles (those with the largest $b_i$) in the corner minimizes the number of $\II$-tiles needed.

    We can solve this problem via dynamic programming (keeping the value of $B_m$ fixed in the DP).
    For any value $c$, our DP has a table $D[x,k]$ with three entries for each value $x \leq n_6'$, each entry corresponding to the minimum number of $\II$-tiles needed for $x$ $\VI$-tiles if $k \in \{0, 1, 2\}$ corners are still available; our answer for $c$ is $D[n_6',2]$.
    We can initialize $D[0,k] = 0$ and compute each cell by considering each potential pair of side lengths of a rectangle with at most $x$ tiles, taking the minimum over all results and combining with previously computed entries to handle the remaining tiles.

    Case (3) can be handled in a very similar manner, using a DP table with $k \leq 4$ available corners and $B_m = w$.
    We once again sort our rectangles in a decreasing order such that $b_i \leq b_{i-1}$.
    We then reorder the first four rectangles $r_1,...,r_4$ of size $a'_1 \times b'_1,...,a'_4 \times b'_4$ by their longer side, i.e., such that $a'_i \leq a'_{i-1}$.
    
    First we try to place $r_1$ in the top-left corner and $r_2$ in the bottom right corner; both are placed vertically (portrait). If these two rectangles overlap then $a'_1 + a'_2 \geq w \land b'_1 + b'_2 \geq w$, thus we would need at least $2w$ $\II$-tiles.
    Next we attempt to place $r_3$ in the top right corner, once again vertically.
    If this causes an overlap with $r_1$ it must be with their shortest side, thus once again causing us to need at least $2w$ $\II$-tiles.
    If this causes an overlap with $r_2$ we place the third rectangle horizontally instead, if this placement still leads to an overlap with $r_2$, we must use at least $2w$ $\II$-tiles again since $a'_1 > a'_2$ and $a'_3 > b'_3$.
    If it leads to an overlap with $r_1$ a similar argument applies since $a'_1 > b'_1$ and $a'_2 > a'_3$.
    If $r_3$ is placed horizontally without overlaps, we place $r_4$ in the remaining corner horizontally as well, without overlaps because of a symmetrical argument to the one above.
    Any remaining rectangles are placed between $r_2$ and $r_4$, if this causes an overlap then it must be the case that $b'_2 + a'_4 + \sum_{i=5}^k a_i + 2b_i \geq w-2$. However, since we know $a'_2 + a'_3 \geq w-1$, combined with the fact that $a'_1,b'_1$ and $b'_3$ are all at least 1, the number of needed $\II$-tiles is once again at least $2w$.
    If $r_3$ is placed horizontally without overlaps but $r_4$ causes overlap instead, a symmetrical argument can be made to show the need for at least $2w$ $\II$-tiles.
    % this looks ok now; should we submit another copy?
    % Yes I think we can, shall I do it?
    
    If $r_1,...,r_4$ are all placed horizontally without overlap, the remaining rectangles can be placed with their longest side on the top. If that does not fit they get placed at the bottom instead, if that does not fit either, we once again have an average of at least two $\II$-tiles per column and thus $2w$ total $\II$-tiles.

\end{proof}

This class is at least as hard as Problem \ref{prob:mcd}.

\subsection{\texorpdfstring{$\langle \TI, \TII, \TIV, \TVI \rangle \hat= \langle \TI, \TIV, \TV, \TVI \rangle$}
{<I,II,IV,VI>=<I,IV,V,TVI>}}

\begin{restatable}{theorem}{ppAndRSAHardness}
    For the class $\langle \TI, \TII, \TIV, \TVI \rangle \hat= \langle \TI, \TIV, \TV, \TVI \rangle$, there exists a pseudo-polynomial time algorithm deciding whether a tiling exists.
    \label{thm:pp-and-rsa-hardness}
    \label{thm:1246}
\end{restatable}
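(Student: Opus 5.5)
We reduce to the structure used for \cref{thm:126}, with \IV-tiles acting as additional separators. \I- and \VI-tiles can never be adjacent, so every maximal connected region of \VI-tiles must be separated from the \I-tiles by a layer of \II- and \IV-tiles. Unlike \II-tiles, \IV-tiles are flexible separators: by \cref{lem:type4-constrained-rect} they can fill any leftover rectangle regardless of the constraints on two consecutive sides. They can also cut diagonal ``staircase'' boundaries, as in the triangle construction for $\langle \TI, \TIV, \TVI \rangle$.

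First we derive sufficient conditions that make the problem trivial. If $\nII + \nIV \geq 2w$, we build full rows of \VI-tiles, separate them by a full row of \II/\IV-tiles, and put the remainder in the top rows. The \IV-tiles are used in the separating rows (half-edges toward the \VI side) and any excess is absorbed with \cref{lem:type4-constrained-rect}, mirroring \cref{lem:2n-2-tiles-suffice} and the $\nIV \geq w$ case of $\langle \TI, \TIV, \TVI \rangle$. So we may assume $\nII + \nIV < 2w$, which keeps all relevant quantities bounded by $O(w)$ and the \VI-cluster structure small.

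Next we prove a monotonicity lemma analogous to \cref{lem:min-2-is-optimal}: with $w,h,\nVI$ and $\nIV$ fixed, converting \I-tiles into \II-tiles preserves tileability. Converting \II-tiles into \IV-tiles should also preserve it, since a \IV-tile can replace a \II-tile on a straight frame segment if its extra half-edge is turned toward a neighbouring \IV or absorbed along the boundary. If this second conversion fails in general, we keep $\nIV$ as an explicit DP parameter. Either way, the decision reduces to computing, for fixed $w,h,\nVI$ and each budget $t \le \nIV$ of \IV-tiles, the minimum number of \II-tiles needed.

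Then we characterize the separating structures. A connected \VI-component bounded only by \II-tiles must be a rectangle, since there are no reflex corners. \IV-tiles are exactly what allows reflex corners and staircase boundaries: a diagonal step in the separator uses a \IV-tile. Each \VI-component is therefore an orthogonal staircase polygon attached to the boundary or to a corner, whose frame consists of straight \II-runs joined by \IV-tiles at the turns. Symmetrically, \I-regions may be cut off in a corner, as in $\langle \TI, \TIV, \TVI \rangle$.

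The algorithm is a dynamic program in the style of the proof of \cref{thm:126}. It first guesses the number $c$ of complete rows or columns of \VI-tiles (polynomially many choices). It then decomposes the remaining $\nVI'$ tiles into pieces, each a rectangle or a staircase cluster. For each piece it tabulates, for each area $x \le \nVI'$, corner count $k \le 4$, and \IV-usage $t \le \nIV$, the minimum \II-cost. A staircase piece is built column by column with non-increasing heights, so its cost table is a knapsack-like recurrence over column heights, polynomial in $w+h+\nVI$. A final check compares the optimum with $\nII$ and places leftover \IV-tiles via \cref{lem:type4-constrained-rect}.

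The main obstacle is the packing argument: the pieces chosen by the DP must actually fit disjointly in the rectangle. We intend to argue as in \cref{thm:126}: any overlap forces an average of at least two separator tiles per row or column, giving $\nII+\nIV\ge 2w$, which is already covered by the first step. The \IV-to-\II conversion lemma is the second delicate point, and the fallback is the extra DP dimension $t$ described above.
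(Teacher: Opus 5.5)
\textbf{There is a genuine gap.} The structural reduction your DP depends on is asserted rather than proved, and it misses the configuration that actually makes this class hard.

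\textbf{What the paper does.} It first shows that $n_2+n_4\ge w$ (not $2w$) always admits a tiling, via a straight separating path. It then argues, by an exchange argument, that every remaining tiling can be normalised into one of two forms: the \VI-tiles, or the \I-tiles, are cut off in a single corner by a staircase of \II- and \IV-tiles. The \VI-corner case, and the \I-corner case with $n_4>1$, have closed-form area bounds. The hard case is the \I-corner case with $n_4=1$. There the single \IV-tile is the only possible reflex corner of the \VI-region, so all \I-tiles lie in one corner rectangle $\mathcal{R}$ framed by $w_{\mathcal{R}}+h_{\mathcal{R}}-2$ \II-tiles. Moreover, $\mathcal{R}$ may have to contain extra \II-framed rectangles of \VI-tiles. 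Their only purpose is to make the area $w_{\mathcal{R}}h_{\mathcal{R}}=n_1+n_2+n_4+m$ (with $m$ the number of \VI-tiles inside $\mathcal{R}$) factor with small semi-perimeter. The algorithm therefore enumerates the $O(wh)$ shapes of $\mathcal{R}$ and runs the \cref{thm:126} DP \emph{inside} $\mathcal{R}$ with budget $n_2-w_{\mathcal{R}}-h_{\mathcal{R}}+2$.

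\textbf{Where your DP differs.} It is organised the other way round: \VI-pieces are framed and packed on an \I-background. Nowhere does it enforce that the region left for the \I-tiles is a rectangle of the right area. Your remark that \I-regions may be cut off in a corner is never reflected in the algorithm.

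\textbf{First failing step.} ``Each \VI-component is therefore an orthogonal staircase polygon attached to the boundary or to a corner'' does not follow from reflex corners requiring \IV-tiles. A \VI-component may have up to $n_4$ reflex corners in arbitrary positions, for example a T-shape hanging from the top boundary, or a region surrounding \I-tiles. You give no exchange argument replacing such components by canonical ones, which is exactly what the paper's ``sufficiency of cases'' paragraph supplies.

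\textbf{Second failing step.} Your packing argument is borrowed from \cref{thm:126}. There, pieces are pushed against a side of length at least $w$, and a collision forces about two separators per row or column. In the hard configuration, however, $n_2+n_4<w$, and the \VI-holes must fit inside $\mathcal{R}$, whose side lengths sum to at most $w$ here. A collision inside $\mathcal{R}$ forces nothing close to $2w$ separators. So your DP can combine a cheap one-step staircase (the complement of $\mathcal{R}$) with hole rectangles that are priced without reference to $\mathcal{R}$, for instance with corner discounts that are not available, and that do not actually fit into $\mathcal{R}$. Accepting whenever the DP optimum is at most $n_2$ is then not sound.

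The $2w$ threshold itself is harmless as a sufficient condition, but the paper's sharper threshold $w$ is what gives $n_2<w$. That bound is exactly what confines all \II-tiles to $\mathcal{R}$ in the hard case.
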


\begin{proof}[Proof Sketch]
We have to separate the $\I$-tiles and $\VI$-tiles by $\II$- and $\IV$-tiles, since $\I$- and $\VI$-tiles cannot be adjacent.
We distinguish the following cases, depicted in \cref{fig:1246}:
(1)~a horizontal or vertical path consisting of $\II$- and $\IV$-tiles connecting two opposite boundaries separates two blocks of $\I$- and $\VI$-tiles,
(2)~all $\VI$-tiles are placed in a region of the grid that contains one corner, separated by a path of $\II$- and $\IV$-tiles connecting two consecutive boundaries,
(3)~all $\I$-tiles are placed in a region of the grid that contains one corner, separated by a path of $\II$- and $\IV$-tiles connecting two consecutive boundaries.
We argue in the full proof in the appendix why 
%the remaining cases, namely multiple disconnected regions of $\VI$- or $\I$-tiles separated by closed loops of $\II$- and $\IV$-tiles or regions on the boundary separated by a path of $\II$- or $\IV$-tiles can be transformed into tilings that fall into one of the cases (1)--(3) and thus, we can assume that 
this case distinction is exhaustive.
% The idea of the argument for this is that, if we have sufficiently many $\II$- and $\IV$-tiles, we can compute tilings in which we pretend the excess $\II$- and $\IV$-tiles to be $\I$-tiles and then repair these by replacing some $\I$-tiles with $\II$- and $\IV$-tiles.
% Intuitively speaking, this allows us to focus on minimizing the surface area where the regions containing $\I$-tiles and $\VI$-tiles touch, which is achieved by one of the cases (1)--(3).

\begin{figure}[htbp]
\centering
\includegraphics[scale=0.41]{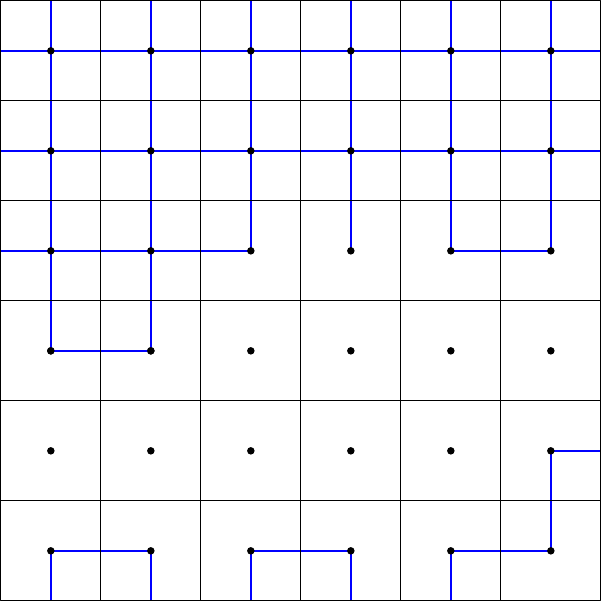}
\hspace{2ex}
\includegraphics[scale=0.41]{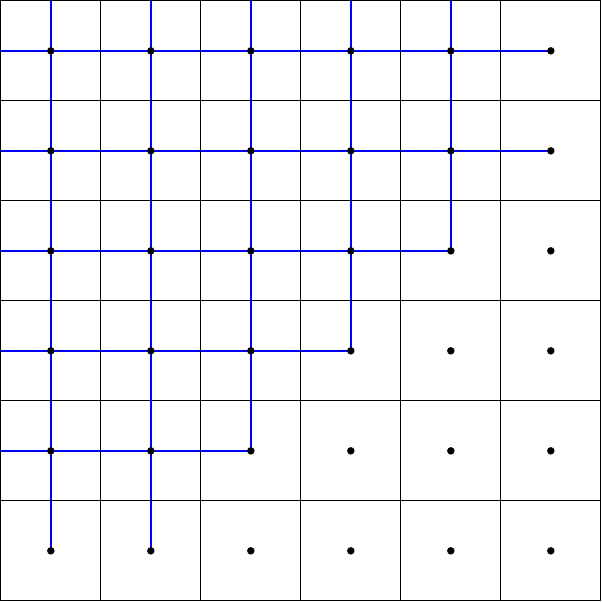}
\hspace{2ex}
\includegraphics[scale=0.41]{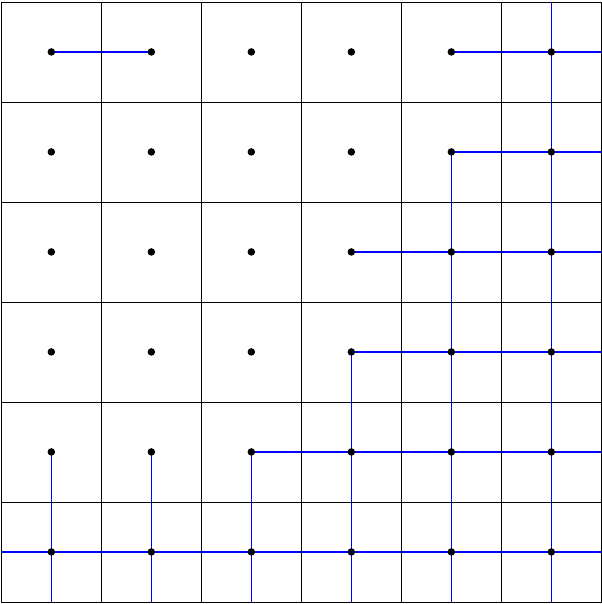}
\caption{Examples of tilings for  $\langle \I, \II, \IV, \VI \rangle$ corresponding to cases 1-3 (left to right). The middle and right should be seen as corners of a bigger grid filled with type $\I$ and $\VI$ tiles respectively.}
\label{fig:1246}
\end{figure}

We can prove that cases (1) and (2) can be solved in polynomial time;
they essentially can be solved by simple surface-area minimization arguments, and so can case (3) if  $n_4 > 1$.
But case (3) with $n_4 = 1$ is more challenging and again connects to Problem~\ref{prob:mcd}.
In this case, we have to delimit the area containing $\I$-tiles from $\VI$-tiles using only the $n_2$ $\II$-tiles and the single $\IV$-tile, which is the only tile that allows us to create a non-convex corner in a connected subregion containing $\VI$-tiles.
This enforces a single rectangle in the corner of our region containing all $\I$-tiles, with the sole $\IV$-tile in the corner of this rectangle and $\II$-tiles along the rest of the edges that are not adjacent to the grid border.
However, it is possible, and in some cases necessary, to use $\II$-tiles to create further rectangles containing $\VI$-tiles inside this rectangle, e.g., to avoid infeasible situations where $n_1 + n_2 + n_4$ is prime and would thus enforce a rectangle of height or width one which cannot contain a $\I$-tile;
for this purpose, we can again deploy a similar DP to the one used in \cref{thm:126}, checking all possible rectangle sizes (for which there are $O(wh)$ options).
Note that all $\II$-tiles also have to be in the separated area: since $n_2 < w$, we cannot make full columns or rows of $\II$-tiles and thus cannot place them in the region containing $\VI$-tiles.
\end{proof}

\section {Some Observations on Open Classes\label{sec:open}}
%\carola{Could consider moving this to the appendix and only summarize it in the Future Work section. }
%\carola{The Venn diagram figure is NOT updated -- there are 8 "?" but there should only be "6". Only include the figure in the appendix if it's updated.} \oswin{The Venn diagram IS and was updated! There are 6 LINES in the table with "?", but two of them stand for two classes, so in total there are 8 classes with "?" in the drawing. That also fits the 8 classes in this Section.}

 \begin {figure}[htbp]
   \centering
     \includegraphics [scale = 0.35, page=1] {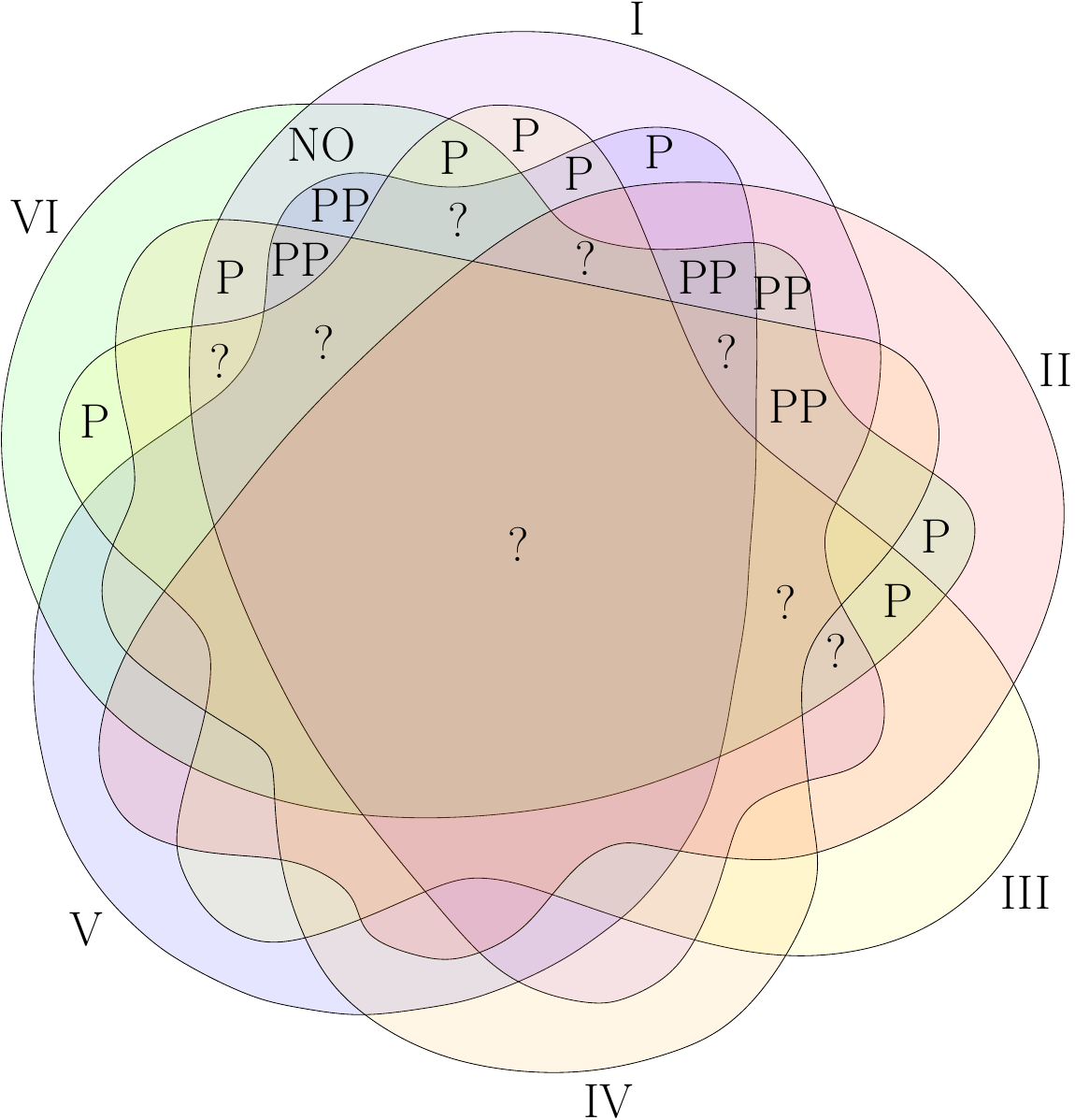}
   \caption {Venn diagram of all classes and their results as shown in Table~\ref{tab:results}, where PP stands for PseudoP. Classes without indicated result are YES instances. Observe that this is a full Venn diagram, that is, dual classes are not grouped together.%\carola{Check out the other pages of the figure in the figures folder}
   }
   \label {fig:venn}
 \end {figure}

While we have provided results for 33 of the 39 classes (counting dual classes as one), as summarized in Table~\ref{tab:results}, the status for six classes remains open. We conjecture that two of them are in \P{} and four are in \Q{}, including the class with all six tile types.

\subsection{Conjectured in \P}

{\bf $\langle \I, \III, \IV, \VI \rangle$.} We identified and solved many cases and our analysis suggests that an exhaustive characterization of all cases should lead to a polynomial-time algorithm. However, the number of cases appears to be very large and we have not characterized all of them.\\

\noindent
{\bf $\langle \I, \II, \IV, \V, \VI \rangle$.} A polynomial time algorithm may be surprising because this is again a class containing tiles of type $\TI, \TII, \TVI$ and it is similar to the other classes containing them. However, the situations that caused hardness for the similar classes cannot occur here.
When there is a yes instance it is always the case that all type $\TI$ and $\TVI$ tiles are clumped in separate clumps with a barrier between them made up of type $\TII$, $\TIV$, and $\TV$ tiles. Since there is at least one of each of these three tiles these clumps are never forced to be rectangular.

\subsection{Conjectured in \Q}
{\bf $\langle \I, \II, \III, \VI \rangle \hat= \langle \I, \III, \V, \VI \rangle$.}
We conjecture that a pseudo-polynomial algorithm similar to the classes in Section~\ref{sec:pseudoP} exists. Tiles of types $\TI, \TII, \TVI$ are also contained in this class and similar to $\langle \TI, \TII, \TVI \rangle$, there need to be rectangles formed with the type $\TVI$ tiles. However, the $ \TIII$-tiles can now also be used to surround these rectangles if they are in a particular location. \\

\noindent
{\bf $\langle \I, \II, \III, \IV, \VI \rangle \hat= \langle \I, \III, \IV, \V, \VI \rangle$.}
This class is similar to  $\langle \TI, \TII, \TIV, \TVI \rangle$, but now also type $\TIII$ tiles are available. We do not expect the type $\TIII$ to change the complexity of the problem and thus conjecture this class is of the same complexity as $\langle \TI, \TII, \TIV, \TVI \rangle$.\\

\noindent
{\bf $\langle \I, \II, \III, \V, \VI \rangle$.}
This class is similar to $\langle \TI, \TII, \TV, \TVI \rangle$, but now also type $\TIII$ tiles are available. Again we do not expect the type $\TIII$ to change the complexity of the problem and thus conjecture this class is of the same complexity as  $\langle \TI, \TII, \TIV, \TVI \rangle$.\\

\noindent
{\bf $\langle \I, \II, \III, \IV, \V, \VI \rangle$.}
This class, where at least one tile of each tile type exists, appears to be at least as hard as the other pseudopolynomial classes: it is unclear how the presence of type $\III$ tiles would make the problem easier.

\section {Future Work}
We have classified the complexity of deciding whether a perfect rectangular tiling exists for classes of subsets of the six tile types, leaving only six out of the 39 classes as open. A natural next step is to verify our conjectured results for these open cases.

Since the tiles can be considered as parts of a graph, each representing a vertex with up to four half-edges, a next natural step would be to investigate graph properties that may be achieved for specific instances, such as obtaining a connected graph, or a tree, or minimizing the area of the largest face / maximizing the area of the smallest face, etc.

Another interesting direction is to impose additional constraints at the boundary. Alternatively, one could eliminate the boundary completely and investigate a perfect tiling of the torus with the given graph tiles. % -- which could also be seen as having a boundary where the top and bottom as well as the left and right coincide.

\begin{comment}
We have characterized which multiplicities of at most three tile types are compatible. The clear next step is to extend this classification to classes with up to six different tile types.

Once this is understood, we can begin to investigate graph properties that may be achieved for specific instances, such as obtaining a connected graph, or a tree, or minimizing the area of the largest face / maximizing the area of the smallest face, etc.
\end{comment}

%\section{Acknowledgments}
%\carola{Add acknowledgments}

%%
%% Bibliography
%%
\bibliography{refs}

\clearpage
\appendix

\section{Hardness for Non-Rectangular Simple Polygons}
\begin{theorem}
   If tile counts are encoded in binary and the domain is given
   as a rectilinear polygon with coordinates in binary encoding, the problem is already \NP-hard if we only have tiles of types {\I} and \III.
   \label {thm:hardness-simple-polygons}
\end{theorem}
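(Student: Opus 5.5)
The plan is a reduction from a variant of \textsc{Subset Sum}. First I read off the tile semantics from the class $\langle \I, \III\rangle$ discussed above. A \I-tile has no half-edges, and a \III-tile is a straight segment. As in the proof of the $\langle \I, \III\rangle$ theorem, every \III-tile propagates along its row or column until it reaches the polygon boundary. Hence a tiling of a rectilinear polygon $P$ with \I- and \III-tiles corresponds exactly to a set of pairwise cell-disjoint \emph{maximal chords} of $P$: maximal horizontal runs of cells within a row of $P$, or maximal vertical runs within a column. No two chords may share a cell, since a shared cell would need a \VI-tile. The question is therefore whether $P$ has a family of cell-disjoint maximal chords of total length exactly $n_3$. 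I also check that every \I-tile remaining outside the chords is consistent: no half-edge points into it, which holds because chords end on the boundary. I would reduce from \textsc{Subset Sum} (or its fixed-cardinality version, which is also NP-hard) with numbers $a_1,\dots,a_m$ and target $t$, all in binary.

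For the construction I would build an orthogonally convex histogram $P$ as a comb. It has a base strip, and above it $m$ bars, where bar $i$ has width $D$ and height $L + a_i$. Consecutive bars are separated by notches, so that above the base every row meets each bar in a horizontal chord of length exactly $D$. I choose $L$ and $D$ as large numbers with distinct scales, for example $D > \sum_i a_i$ and $L > $ everything else, and set $n_3 = kL + t$ plus a fixed offset for forced structure if needed. The intended solutions take one full vertical chord per chosen column: one column of length $L + a_i$ inside each chosen bar $i$, with nothing else. The binary encoding keeps the polygon description polynomial.

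For correctness, the easy direction places one vertical chord in each bar indexed by the subset. For the converse I argue modulo the scales. Vertical chords contribute multiples of roughly $L$ plus the $a_i$. Horizontal chords within bars contribute multiples of $D$. Horizontal chords through the base have length about the full width and block every vertical chord passing through them. I then show that the exact count $n_3 = kL + t$ forces: no base rows (they are too long or destroy too much); exactly $k$ vertical chords, obtained by comparing coefficients of $L$; and no horizontal bar chords. The last point uses that any horizontal bar chord contributes a nonzero multiple of $D$ in a range that no combination of the $a_i$ can cancel, and that a horizontal chord blocks the vertical chords crossing it. It then remains to rule out two vertical chords in the same bar, which would encode $2a_i$. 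The fixed-cardinality variant helps here, or I can make each bar width $1$ in the chord-carrying part while keeping $D$-wide bodies elsewhere.

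The main obstacle is this converse: controlling the ``parasitic'' chords so that short horizontal chords cannot be used as a free additive adjustment. Width-$1$ bars would give length-$1$ chords that make any total achievable, which is why the scales matter. I would first try choosing $D$ as a large multiple that is coprime-separated from the range of $\sum a_i$, and verify with a base-$(\text{large})$ digit argument that all parasitic contributions land in disjoint digit positions of $n_3$, each forced to zero.
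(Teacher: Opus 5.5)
Your reformulation of these tilings as families of cell-disjoint maximal chords is correct; the paper's proof rests on the same observation. The converse direction of your reduction fails, however, and the trouble is structural rather than a matter of tuning scales.

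A bar of width $D$ contains $D$ interchangeable columns. Each carries a vertical chord of the same length ($L+a_i$ plus the base height, since the column continues through the base), and these chords are pairwise disjoint. So the most your counting can force is $\sum_i c_i(L+a_i)=kL+t$, where $c_i\ge 0$ is the number of chords used in bar $i$. That describes a multiset, not a subset, and fixing the cardinality does not exclude repetitions. For the numbers $1,3$ with $k=2$ and $t=2$, no $2$-subset has sum $2$, yet two columns of the first bar give a valid tiling, so a no-instance maps to a yes-instance. Your other remedy, width-$1$ tips, creates length-$1$ horizontal chords, as you note yourself.

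Some parasitic chords are also not accounted for. Every notch column carries a vertical chord whose length is the base height ($1$ for a one-row base). There can be about $mL$ horizontal bar chords, so their total $jD$ reaches the scale of $L$ and interferes with comparing $L$-coefficients when $\sum_i a_i<D<L$. The announced digit argument is not supplied, and it is unclear how it could work with these scales. A minor point: a comb is a histogram but not orthogonally convex, though the theorem does not need convexity.

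The missing idea is to give each number exactly one chord. The paper builds a left-aligned staircase whose $i$-th row has width $S_i$; sorting makes it an orthogonally convex histogram. It appends $Z+1$ rows of width $Z+1$ and asks for $n_3=Z$ tiles of the straight type.

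A row hosts at most one horizontal chord, so each $S_i$ is used at most once by construction. The appended rows have width $Z+1$, and every column has height at least $Z+1$. This last point needs $S_i\le Z$ for all $i$, which may be assumed after discarding larger numbers. Hence no other chord fits into the budget $Z$, and tilings correspond exactly to subsets summing to $Z$, with no scales, cardinality constraint or digit argument.

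If you prefer to keep your gadget, the clean route is to embrace repetitions and reduce from unbounded Subset Sum, which is also NP-hard. Use adjacent blocks without notches, each wider than $n_3$, so that every horizontal chord is too long. Then tilings correspond exactly to nonnegative integer combinations of the block heights equal to $n_3$.
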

\begin{proof}
If we have binary encoded tile counts, we can use a very simple domain to reduce from \textsc{Subset Sum}.
Let $S_1, \ldots, S_k > 0$ be the input numbers for \textsc{Subset Sum} with $S = \sum_{i=1}^k S_i$ and desired sum $Z < S$.
W.l.o.g., assume that $S_1 \leq S_2 \leq \cdots \leq S_k$.
We only use tile types \I{} and \III{} in our reduction;
exactly $Z$ tiles are of type \III{}.
We construct a simple orthogonal polygon (actually, a histogram) row-by-row, starting from the top.
The first row has width $S_1$, the second row has width $S_2$, and so on;
after the row corresponding to $S_k$, we add $Z+1$ additional rows of width $Z+1$;
encoded as an orthogonal polygon, this yields a total of $O(k)$ corners with polynomial-size integer coordinates.

Firstly, we observe that due to the sorting and the $Z+1$ extra rows,
it is impossible to place any $\III$-tile vertically: for that, we would
have to make a full column, but all columns have height at least $Z+1$.
We also cannot place any $\III$-tiles in the extra rows, as these are $Z+1$ cells wide.
Thus, essentially, for each $S_i$, we allow the placement of exactly $S_i$ many type \III{} tiles in a specific row.
Therefore, if $Z$ can be expressed as a sum of the $S_i$ in which each $S_i$ is only used once, we can find a tiling; otherwise, it is impossible.
\end{proof}

\section {Additional Theorems and Proofs\label{app:yes}}

\subsection{YES and NO Classes}

\thmyesThree*
%\begin{theorem}
%  For classes
%  $\langle \I, \II, \III \rangle \hat= \langle \III, \V, \VI \rangle$,
%  $\langle \I, \II, \IV \rangle \hat= \langle \IV, \V, \VI \rangle$,
%  $\langle \I, \II, \V \rangle \hat= \langle \II, \V, \VI \rangle$,
%  $\langle \I, \III, \IV \rangle \hat= \langle \III, \IV, \VI \rangle$,
%  $\langle \I, \IV, \V \rangle \hat= \langle \II, \IV, \VI \rangle$,
%  $\langle \II, \III, \IV \rangle \hat= \langle \III, \IV, \V \rangle$,
%  $\langle \II, \III, \V \rangle$, $\langle \II, \IV, \V \rangle$,
%  a tiling always exists.
%\label{thm:cases3}
%\end{theorem}

\begin{figure}[htbp]
\centering
\includegraphics[width=.9\textwidth]{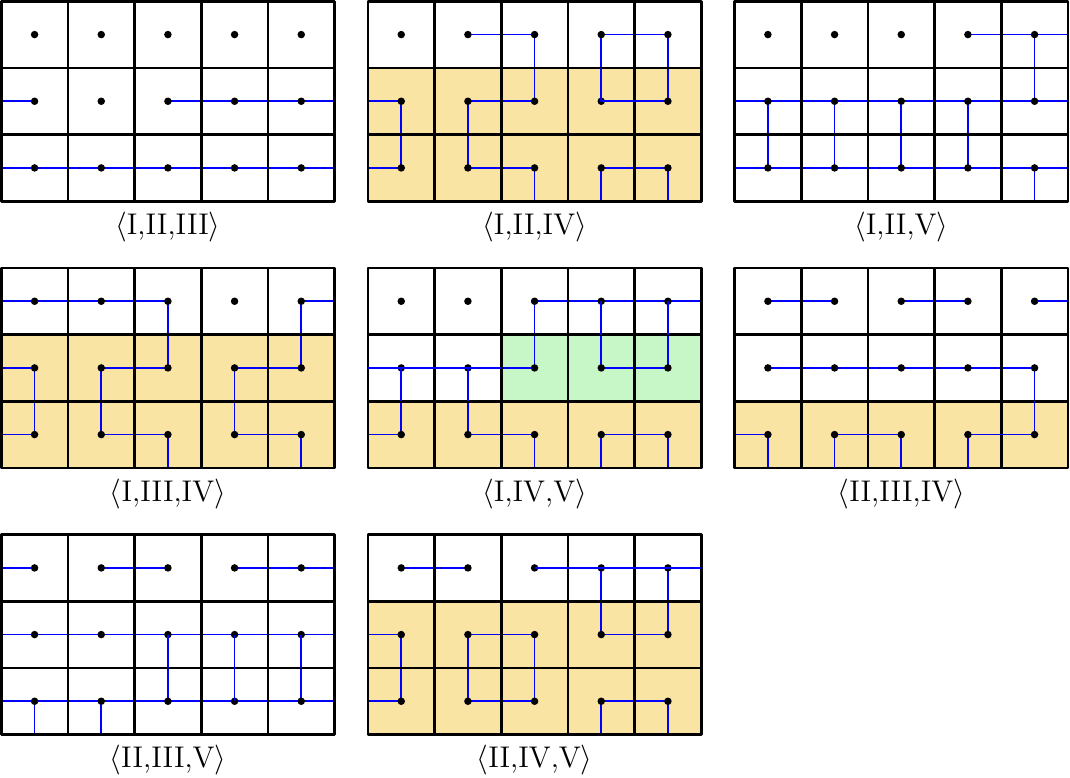}
\caption{Depiction of the tilings constructed in the proof of \cref{thm:cases3}; the yellow and green shaded areas each depict a single application of \cref{lem:type4-constrained-rect}.}
\label{fig:yes_cases3}
\end{figure}
\begin{proof}
For class $\langle \TI, \TII, \TIII \rangle$, simply place the type $\I$ tiles first and continue with type $\II$ tiles, filling rows in reading order until we run out.
If we do not end on a full row or with a halfedge pointing to the right, we can always modify the partially filled row by making its leftmost tile a type \II{} tile pointing towards the boundary, shifting the other tiles to the right.
We can then continue filling the remaining rows with type $\III$ tiles.

For class $\langle \TI, \TII, \TIV \rangle$, simply place the type $\I$ tiles first and continue with $\II$-tiles until they run out. Fill the remainder of a partially filled row with $\IV$-tiles, pointing their vertical halfedges downwards and alternating the orientation of their horizontal halfedges.
If unfilled rows remain, apply \cref{lem:type4-constrained-rect} to the remaining rectangle.

For class $\langle \TI, \TII, \TV \rangle$, simply place the type $\I$ tiles first and continue with type $\II$ tiles until they run out, making sure that we end with a full row or a half-edge pointing right (by starting the last row appropriately at the boundary).
Then fill the row with type $\V$ tiles with no halfedge pointing up. 
Then construct full rows of $\V$-tiles; the row above determines, for each cell individually, whether there is a halfedge pointing up and none pointing down or vice versa.

For class $\langle \TI, \TIII, \TIV \rangle$, let $\nI = kw + r$ for $0 \leq r < w$ and $\nIII = \ell w + s$ for $0 \leq s < w$.
First, we fill $k + \ell$ full rows of type $\I$ and $\III$ tiles,
leaving us with a $w \times (h - k - \ell)$ rectangle to fill with $r$ \I-tiles and $s$ \III-tiles alongside $\nIV$ \IV-tiles.
If $r + s < w$, we create a single row, starting with the $s$ tiles of type $\III$, followed by a single type $\IV$, then the $r$ tiles of type $\I$, and filled with as many more type $\IV$ as needed to fill the row;
we can then apply \cref{lem:type4-constrained-rect} on the remaining rows.
If $r + s \geq w$, due to $r + s \leq 2w - 2$ and $\nIV > 0$, we have $\nIV \geq 2$.
We can thus create two rows: one filled with $r$ \I-tiles, followed by a single $\IV$-tile in column $r+1$ and filled with \III-tiles, the other also with a single $\IV$-tile in column $r+1$ and the remaining $\III$-tiles left of that column; this row is then filled with $\IV$-tiles.
Afterwards, if there are remaining rows, we can again use \cref{lem:type4-constrained-rect} on them.

For class $\langle \TI, \TIV, \TV \rangle$, we first place all $\I$-tiles in reading order until they run out.
Then we place a single $\IV$-tile, followed by $\V$-tiles with no halfedge pointing up.
We then continue by filling rows with $\V$-tiles, with no halfedges pointing up or down depending on the row above, until they run out.
We then apply \cref{lem:type4-constrained-rect} on the remainder of the row the $\V$-tiles ran out in (if non-empty), and then again on the remaining rows (if any); see \cref{fig:yes_cases3}.

For class $\langle \TII, \TIII, \TIV \rangle$, we simply place all $\II$-tiles until they run out, making sure that we end with a halfedge pointing right.
We then continue filling rows with $\III$-tiles, filling any partially filled row with $\IV$-tiles before applying \cref{lem:type4-constrained-rect} on the remaining rows.

For class $\langle \TII, \TIII, \TV \rangle$, we place the type $\II$ tiles until they run out, making sure that we end with a half-edge (by starting appropriately at the boundary). Then we place type $\III$ tiles, followed by filling the row with type $\V$ tiles with a half-edge down.
The remaining rows can be filled with type $\V$-tiles (with no halfedge pointing up or down, depending on the row above).

Finally, for class $\langle \TII, \TIV, \TV \rangle$, we again first place the $\II$-tiles, making sure that we end with a halfedge pointing right.
We then fill rows with $\V$-tiles until they run out, fill the remaining partially-filled row with $\IV$-tiles and apply \cref{lem:type4-constrained-rect} on the remaining rows.
\end{proof}
%\subsection{YES Classes with four or five tile types}

\thmyesFourFive*

\begin{figure}[htbp]
\centering
\includegraphics[width=.9\textwidth]{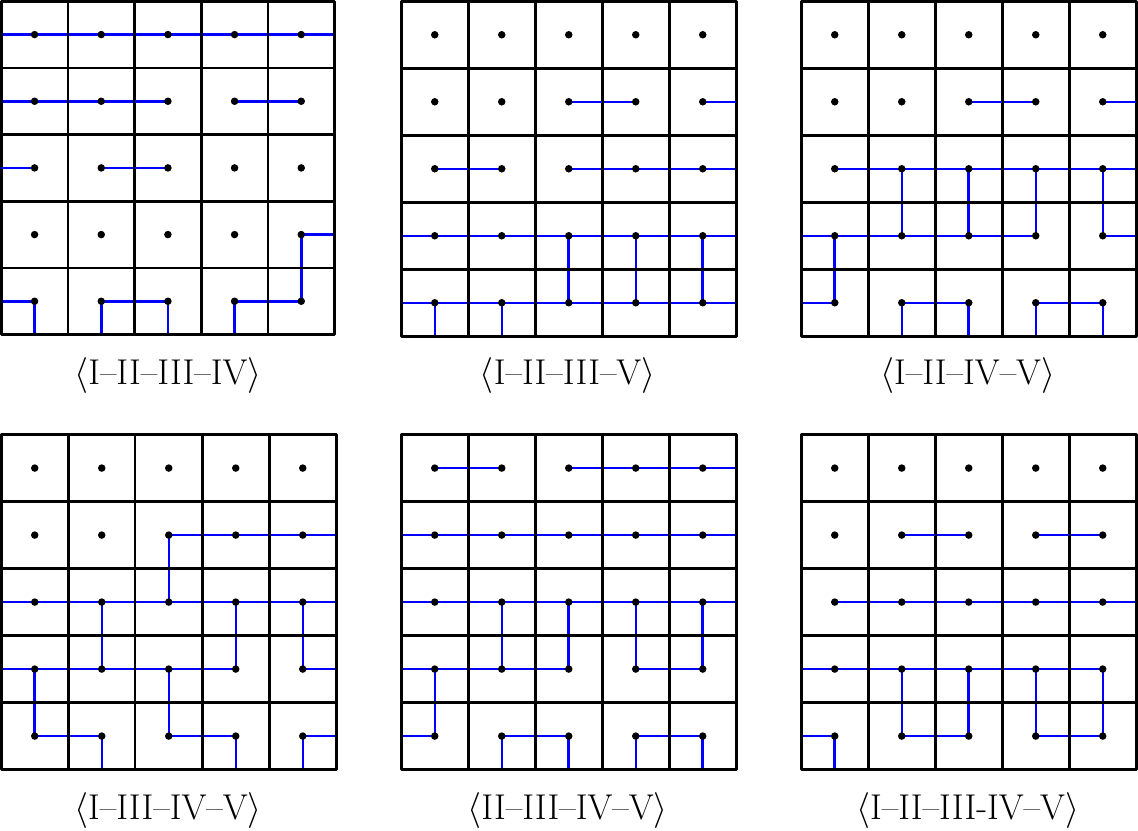}
\caption{Examples of the tilings constructed in the proof of \cref{thm:cases4-5}. }
\label{fig:yes_cases4-5}
\end{figure}

\begin{proof}

    For class $\langle \TI, \TII, \TIII, \TIV \rangle$, we first place the type $\III$ tiles until we run out. Then we place type $\II$ tiles until we run out, making sure to end without a half edge by optionally placing a single type $\II$ tile at the top left. Next we can place the type $\I$ tiles until we run out. Finally, we can place type $\IV$ tiles in the remaining space using \cref{lem:type4-constrained-rect}.
     
  For class $\langle \TI, \TII, \TIII, \TV \rangle$, we place the type $\I$ tiles until we run out. Then we place type $\II$ tiles until we run out, making sure to end with a half edge by optionally placing a single type $\II$ tile at the top left. Next we can place the type $\III$ tiles until we run out. Finally, we can place type $\V$ tiles in the remaining space, making sure to line up the sides without a half edge properly.

For class $\langle \TI, \TII, \TIV, \TV \rangle$, we place the type $\I$ tiles until we run out. Then we place type $\II$ tiles until we run out, making sure to end with a half edge by optionally placing a single type $\II$ tile at the top left. Next, we can place the type $\V$ tiles until we run out. Finally, we can place type $\IV$ tiles in the remaining space using \cref{lem:type4-constrained-rect}.

  For class $\langle \TI, \TIII, \TIV, \TV \rangle$, we place the type $\I$ tiles until we run out. Then we place a single type $\IV$ tile, with a single type $\V$ tile below it. If $\nIII + \nV < w-1$, place this type $\V$ tile with the side without a half edge to the left, otherwise this side goes down. Continue to the right of the type $\IV$ tile with the type $\III$ tiles until we run out. Finally place the type $\IV$ tiles using \cref{lem:type4-constrained-rect}.

For class $\langle \TII, \TIII, \TIV, \TV \rangle$, we place the type $\II$ tiles until we run out, making sure to end at a half edge by starting appropriately at the boundary. Then we place type $\III$ tiles until we run out. Next we can place the type $\V$ tiles until we run out. Finally, we can place type $\IV$ tiles in the remaining space using \cref{lem:type4-constrained-rect}.

For class $\langle \TI, \TII, \TIII, \TIV, \TV \rangle$, we place the type $\I$ tiles until we run out. Then place type $\II$ tiles until we run out, making sure to end with a half edge by optionally placing a single type $\II$ tile at the top left. Afterwards, we place the type $\III$ tiles until we run out. Next we can place the type $\V$ tiles until we run out. Finally, we can place type $\IV$ tiles in the remaining space using \cref{lem:type4-constrained-rect}.
\end{proof}

%%%%%%
\subsection{Polynomial Classes}

%\subsubsection{\texorpdfstring{$\langle \TI, \TIII, \TV \rangle \hat= \langle \TII, \TIII, \TVI \rangle$}{<I,III,V>=<II,III,VI>}}
\thmIiiiV*
\begin{proof}
    The proof follows the cases.
    It is easy to see that a tiling is impossible iff $\nIII + \nV < w \leq h$, since this does not allow us to make a single full row or column with these tiles; this is necessary for a valid tiling.

    If $\nIII + \nV$ is evenly divided by $w$ or $h$, we can make full rows or columns of either type of tile in addition to a single mixed row (or column), which can be placed at the boundary.

    If $\nV = 1$ and $s < w - q$, we can make a block of $q$ full columns of the \III-tiles with a distance of exactly $s+1$ to the left boundary.
    In the first column of that block, we can replace any $\III$-tile by the single $\V$-tile pointing left, placing the remaining $s + 1$ $\III$-tiles, including the replaced $\III$-tile, as a straight line connecting 
    the $\V$-tile and the left boundary.
    The case of $\nV = 1$ and $r < h - k$ works analogously using rows instead of columns.
    
    Otherwise, if $\nV = 1$, we cannot create a tiling.
    We can only make a single crossing using our single $\V$-tile.
    In the following, we argue the case that the $\V$-tile is positioned so that no halfedge points to the right; the other cases are analogous.
    Let $x \geq 0$ be the number of columns between the $\V$-tile and the left boundary.
    We thus know that $x$ \III-tiles must be placed between the $\V$-tile and the left boundary, and that column $C$ containing the $\V$-tile contains a total of $h-1$ $\III$-tiles.
    All other tiles to the left of $C$ must be $\I$-tiles; all remaining $\III$-tiles must be placed to the right of $C$ in full columns.
    Thus, $x$ must be chosen such that $h \mid (\nIII - h - x + 1)$.
    We show that, under the assumption that proposition (4) does not hold, 
    there are too many $\III$-tiles to fit right of $C$.
    W.l.o.g., we may therefore assume that the available space is maximized by minimizing $x$, i.e., setting $x = s + 1$; otherwise, we would be losing a full column of space for each increment of $x$.
    In that case, accounting for $C$ which contains $h-1$ $\III$-tiles,
    a total of $q-1$ full columns have to fit right of $C$;
    in other words, $w \geq x + 1 + (q-1) = s + 1 + q \Leftrightarrow s < w - q$ has to hold, contradicting our assumption.

    We have thus handled the case $\nV = 1$; 
    in the following, we assume $\nV \geq 2$.
    First, if none of the conditions (1)--(4) hold, we must have $\nV < 2(w-\nIII)$ and thus also $\nIII < w$.
    We have at least $w+1$ and at most $2w-2$ non-\I-tiles, of which at least $2$ are $\V$-tiles.
    
    We first observe that any tiling consisting solely of $\I, \III$ and $\V$-tiles with $\nIII + \nV < 2w-1 \leq 2h-1$ must have a straight-line path connecting one boundary with the opposite one.
    In particular, we do not have enough tiles to simultaneously connect the left and right and the top and bottom boundaries, or to completely fill two columns or rows.
    Any straight-line path that starts at one boundary must terminate at the opposite one, unless it is stopped by a $\V$-tile.
    This tile then starts a perpendicular straight-line path, which must itself be stopped by another $\V$-tile and so on, until we are either left with a straight-line path connecting two opposite boundaries or a cycle with antennas connecting it to each boundary; the latter requires at least $2w$ tiles and can thus be excluded in this case.
    Because $\nIII < w$, this straight-line path must contain at least $w-\nIII$ $\V$-tiles, and all non-$\I$-tiles must lie on one side of that path.
    This implies that this path does not lie adjacent to a boundary.
    Therefore, any $\V$-tile on this path needs to have another tile next to it.
    This contradicts $\nV < 2(w - \nIII)$.

To analyze the remaining case that only proposition (3) holds, in which we claim that a tiling always exists, we first consider the case $\nIII < w$,
split into three sub-cases (a)~$\nIII + \nV \leq 2w$, (b)~$2w < \nIII + \nV < 3w$ and (c)~$3w \leq \nIII + \nV$.
For (a), we can achieve a tiling that places all $\III$-tiles in the second-lowest row as depicted in \cref{fig:135_case_a}.
    \begin{figure}[htbp]
        \centering
        \includegraphics[width=.7\textwidth]{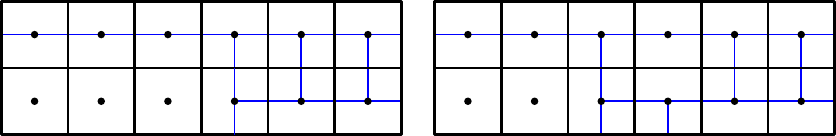}
        \caption{Handling case (3), subcase (a): $\nIII + \nV \leq 2w$ and $\nV \geq 2(w-\nIII)$. Left: handling $\nV = 2(w-\nIII)$; right: increasing $\nV$ by one.}
        \label{fig:135_case_a}
    \end{figure}

    In subcase (c)~$\nIII + \nV \geq 3w$, we can place all $\III$-tiles in the first row, fill it with $\V$-tiles and handle the remaining $\V$-tiles analogously to our construction for the class $\langle \TI, \TV \rangle$,
    which only requires three boundaries to be accessible.
    For (b)~$2w < \nIII + \nV < 3w$, we can create a tiling as depicted in \cref{fig:limited-range-2n-3n}.

    Finally, we consider the case $\nIII \geq w$ and $\nV \geq 2$.
    In this case, we begin by filling complete rows at the bottom boundary
    with $\V$-tiles until the remaining number $\nV$ of $\V$-tiles drops below $w$.
    If $\nV = 0$ tiles remain, we replace two $\V$-tiles in the filled rows by
    $\III$-tiles; depending on whether we filled an even or an odd number of rows,
    we can use either the two leftmost tiles in the lowest row or
    the leftmost tile in the two lowest rows.
    Similarly, if $\nV = 1$ tile remains, we replace one $\V$-tile in the filled rows by a $\III$-tile.
    This ensures that $\nV \in [2, w-1]$ $\V$-tiles remain.
    We can furthermore guarantee that $\nIII \geq w-1$ $\III$-tiles remain;
    in the only case in which we end up with $w-2$ $\III$-tiles, $\nIII = w$ and $w \mid \nV$, and we would not be in this case.
    We put aside $2$ $\V$-tiles and $w-1$ $\III$-tiles.
    We then construct full rows of $\III$-tiles above the block of 
    complete rows of $\V$-tiles until what remains does not suffice for a full row.
    If, not including the tiles put aside, the remaining tiles together can 
    fill another row, we construct such a row at the top boundary;
    in that case, we put all remaining $\V$-tiles, excluding the two tiles put aside, into that row.
    Note that this row must contain at least one $\V$-tile.
    We then place the remaining tiles, including the tiles put aside, as depicted in \cref{fig:i-iii-v-put-aside}.
\end{proof}

    \begin{figure}[htbp]
        \centering
        \includegraphics[width=.3\textwidth, angle=90]{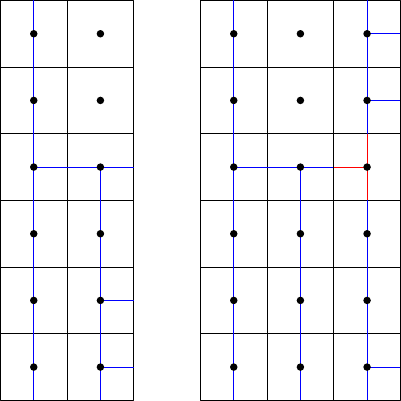}
        \caption{Using at least $w-1$ $\III$-tiles and at least $2$ $\V$-tiles, we can accommodate the remaining collection of $\III$- and $\V$-tiles. \textbf{Bottom:}~not including the tiles put aside, fewer than $w$ tiles remained, and we did not create a mixed $\III$- and $\V$-row at the top boundary. \textbf{Top:}~we created an additional full row at the boundary, ensuring that the only tile that needs an incoming half-edge on its bottom side (red) is a suitably rotated $\V$-tile. In this case, the two rows below the topmost one only contain $\III$-tiles and the two $\V$-tiles previously put aside.}
        \label{fig:i-iii-v-put-aside}
    \end{figure}%

    \begin{figure}[htbp]
        \centering
        \includegraphics[width=.6\linewidth]{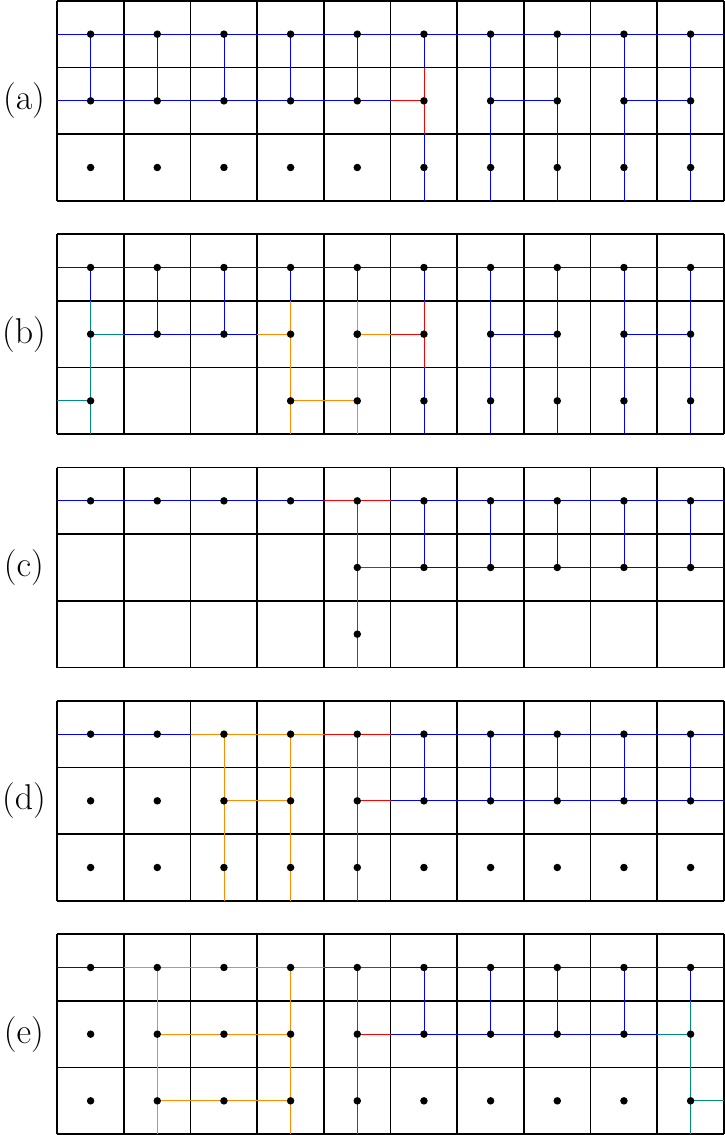}
        \caption{Creating a tiling with $\nIII < w$ and 
                 $2w < \nIII + \nV < 3w$.
                 (a,b)~With at least $2w$ $\V$-tiles, we can fill two
                 rows with a ladder-like structure, placing all $\III$-tiles
                 pointing towards the boundary.
                 By using the right boundary, we resolve parity issues to ensure the red $\V$-tile
                 points to the right. (b)~shows how we can add two (orange) or one (cyan) additional $\V$-tiles to the base construction. (c,d,e)~With fewer than $2w$ $\V$-tiles,
                 we fill the third row from the bottom using the basic construction shown in~(c). (d)~shows how to add four additional $\V$-tiles (orange); due to the bound on $\nV$, we cannot run out of space during this construction.
                 (e)~shows how to modify
                 an addition of four tiles to instead add six tiles (orange);
                 together with adding one tile (cyan), this allows us to
                 handle moduli $1$ (cyan only), $2$ (orange only) and $3$ (cyan and orange) modulo $4$, and thus any number of additional $\V$-tiles.}
        \label{fig:limited-range-2n-3n}
    \end{figure}
%This lemma, together with the fact that for $n < 4$,
%the existence of a tiling can simply be checked via brute force,
%shows that finding a tiling if one exists is in \P{} in this case.
%\oswin{Some incomplete thoughts: Let the number of tiles of type $\I$ be 
% $N_\I = kn+r$ for $0 \le r < n$, 
%  and the number of tiles of type $\III$ be $N_\III$ and of type $\V$ be
% $N_\V$.
%  If $r=0$ then any combination of tiles of type $\III$ and $\V$ can fill a block of size $(n-k) \times n$ as previously explained. So assume $r>0$.  If $N_\V=1$ then it is easy to see that we find a tiling if and only if $n-1 \leq N_\III \leq 2n-2$. If $N_\I$ is too large, that is, $k \geq n-1$, then no tiling is possible. So assume $k \leq n-2$. If $N_\V = 2$ then we can make a tiling as shown in the figure on the right. IF $N_\V > 2$ we can replace some tiles of type $\I$ by tiles of type $\V$, but if $N_\V$ gets too large this needs additional arguments and/or a slightly different construction. }

%\vspace{1em} \noindent {\bf value:} \P
%%%%%%%%%%%%%%%%%%%%%%%
\subsection{Pseudopolynomial Classes}
Furthermore, for any fixed $w, h$ and $\nVI$, replacing $\I$-tiles by $\II$-tiles never hurts.
\lemmintwoopt*
\begin{proof}
    Recall we assume $w \leq h$.
    We show that we can replace a single $\I$-tile by a $\II$-tile,
    i.e., we assume $\nII' = \nII+1$; the lemma follows by induction.
    If $\nII' \geq 2w$, the lemma holds by \cref{lem:2n-2-tiles-suffice}.
    Thus, we can assume $\nII < 2w-1$.
    Consider a tiling $T$ for the original instance.
    If there is a $\I$-tile adjacent to a boundary,
    we can replace it by a $\II$-tile pointing towards the boundary and are done.
    We now want to show that, if we cannot find such a tile, there are already at least $2w-1$ $\II$-tiles.
    
    If there are complete rows of $\VI$-tiles on both the bottom and the top boundary (or complete columns on the left and right boundary), due to $\nII, \nI > 0$, we know that there must be at least $2w \leq 2h$ $\II$-tiles to separate an area between the two blocks of full rows/columns.
    
    If there are complete rows or columns of $\VI$-tiles on exactly one boundary, say, w.l.o.g., rows at the top boundary, then the only case in which there are not two $\II$-tiles in each column is if $h-1$ rows are filled with $\VI$-tiles; but that contradicts $\nI > 0$.

    In the case that there are neither full rows nor full columns of $\VI$-tiles,
    the only way in which a column may not contain at least $2$ $\II$-tiles is if it contains exactly one (either at the bottom or at the top) and is otherwise filled with $\VI$-tiles.
    In that case, there is a rectangle of height $h-1$ and some width $x$ filled with $\VI$-tiles; there cannot be more than one such rectangle since each contributes at least $h \geq w$ $\II$-tiles.
    This rectangle occupies $x+1$ or $x+2$ columns (including its \II-tile boundary); in the latter case, its boundary already contains $2w-1$ $\II$-tiles.
    Otherwise, to avoid a $\I$-tile on the region boundary, its boundary contains $x + h$ tiles; if $x+h \geq 2x + 2 \Leftrightarrow h \geq x + 2$, the columns of this rectangle also contain at least $2$ $\II$-tiles on average.
    Due to $h \geq w \geq x+1$, the only remaining case is $h = w = x+1$, which contradicts $\nI > 0$.
\end{proof}

\subsubsection{\texorpdfstring{$\langle \TI, \TII, \TIV, \TVI \rangle \hat= \langle \TI, \TIV, \TV, \TVI \rangle$}
{<I,II,IV,VI>=<I,IV,V,TVI>}}
\ppAndRSAHardness*
\begin{proof}
We have to separate the $\I$-tiles and $\VI$-tiles by $\II$- and $\IV$-tiles, since $\I$- and $\VI$-tiles cannot be adjacent.
\begin{figure}[htbp]
\centering
\includegraphics[scale=0.41]{figures/1246-1.pdf}
\hspace{3ex}
\includegraphics[scale=0.41]{figures/1246-2.pdf}
%\hspace{1ex}
\includegraphics[scale=0.41]{figures/1246-3.pdf}
\caption{Examples of tilings for  $\langle \I, \II, \IV, \VI \rangle$ corresponding to cases 1-3 (left to right). The middle and right should be seen as corners of a bigger grid filled with type $\I$ and $\VI$ tiles respectively.}
\label{fig:1246-a}
\end{figure}%
We distinguish the following cases, depicted in \cref{fig:1246-a}:
(1)~a horizontal or vertical path consisting of $\II$- and $\IV$-tiles connecting two opposite boundaries separates two blocks of $\I$- and $\VI$-tiles,
(2)~all $\VI$-tiles are placed in a region of the grid that contains one corner, separated by a path of $\II$- and $\IV$-tiles connecting two consecutive boundaries,
(3)~all $\I$-tiles are placed in a region of the grid that contains one corner, separated by a path of $\II$- and $\IV$-tiles connecting two consecutive boundaries.
We will argue later why the remaining cases, namely multiple disconnected regions of $\VI$- or $\I$-tiles separated by closed loops of $\II$- and $\IV$-tiles or regions on the boundary separated by a path of $\II$- or $\IV$-tiles can be transformed into tilings that fall into one of the cases (1)--(3) and thus, we can assume that our case distinction is exhaustive.
The idea of the argument for this is that, if we have sufficiently many $\II$- and $\IV$-tiles, we can compute tilings in which we pretend the excess $\II$- and $\IV$-tiles to be $\I$-tiles and then repair these by replacing some $\I$-tiles with $\II$- and $\IV$-tiles.
Intuitively speaking, this allows us to focus on minimizing the surface area where the regions containing $\I$-tiles and $\VI$-tiles touch, which is achieved by one of the cases (1)--(3).

{\bf Case (1).}
A tiling of type (1) is possible iff $n_4 + n_2 \geq w$.
All $\VI$-tiles can be placed at the top of the grid in reading order,
yielding at most one partially-filled row.
If a partial row is present, a single $\IV$-tile suffices to stop the $\VI$-tile block.
Afterwards, any empty tile below a $\VI$-tile can be filled by either $\IV$- or $\II$-tiles; any parity issues with the $\IV$-tiles can be solved at the boundary.
Next, any remaining $\II$-tiles can be placed in pairs, again repairing parity at the border, followed by all $\I$-tiles.
Finally, the remaining space can be filled with $\IV$-tiles using \cref{lem:type4-constrained-rect}.
Thus, in the following, we can assume $n_4 + n_2 < w$.

{\bf Case (2).}
%\begin{lemma}
%    Let $a = \lfloor n_2/2 \rfloor\lceil n_2/2\rceil + n_2n_4 + n_4(n_4-1)/2$.
%    A tiling of type (2) is possible iff $\nVI \leq a$.
%\end{lemma}
%\begin{proof}
Let $a = \lfloor n_2/2 \rfloor\lceil n_2/2\rceil + n_2n_4 + n_4(n_4-1)/2$. We  claim that a tiling of type (2) is possible iff $\nVI \leq a$. We prove this claim in the remainder of this case:
A tiling of type (2) contains a path of $\II$- and $\IV$-tiles connecting one side of the boundary to a consecutive side; w.l.o.g., let that be the left side and the top side as depicted in \cref{fig:1246}~(middle).
    We will later argue that, in this case, we can modify tilings for a reduced $\nVI$ if $\nII, \nIV$ stay constant; thus, we want to maximize the area of the region that can receive $\VI$-tiles, that is, that we can separate from the region containing $\I$-tiles.
    Moreover, we can assume that the whole region it separates is filled with $\VI$-tiles, as otherwise we can just make the separated region smaller.
    Within such a path, a $\II$-tile can either correspond to a horizontal or a vertical move, whereas a $\IV$-tile corresponds to a horizontal, vertical or diagonal move, with diagonal moves resulting in a larger area gain compared to only using horizontal and vertical moves.
    To maximize the separated area, we therefore construct a path from the left to the top boundary, starting in $\lfloor\nII/2\rfloor$ $\II$-tiles and ending in $\lceil\nII/2\rceil$ $\II$-tiles, connected by $\nIV$ diagonal steps using $\IV$-tiles; up to swapping the number of horizontal and vertical steps done via $\II$-tiles, this is the unique strategy that maximizes the separated area,
    and it separates a total area of $a$, which can be completely filled with $\VI$-tiles, allowing us a tiling with up to $a$ $\VI$-tiles.
    
    To handle any lower number of $\VI$-tiles, we can erase parts of separated columns from right to left by moving the rightmost $\IV$-tile upwards and swapping the \II-tile to its right into its place.
    Full columns or rows can also be removed.
    Any extra $\II$- and $\IV$-tiles can be placed in the region containing the $\I$-tiles.

{\bf Case (3), $n_4 > 1$.}
We subdivide Case (3) into two subcases, $n_4 > 1$ and $n_4 = 1$.
The case $n_4 > 1$ is similar to Case (2) with the $\I$-tiles confined to the separated corner.
A notable difference is that $\II$-tiles cannot be placed adjacent to multiple $\VI$-tiles like they are to $\I$-tiles in Case (2).
This means that the type $\IV$ tiles are placed one tile back, resulting in $n_2$ fewer $\I$-tiles being allowed than $\VI$-tiles in Case (2).
Finally, any excess $\II$-tiles must be placed inside the type $\I$ area; see \cref{fig:1246}~(right).

{\bf Case (3), $n_4 = 1$}
This is the difficult case that connects back to Problem~\ref{prob:mcd}.
We have to delimit the area containing $\I$-tiles from $\VI$-tiles using only the $n_2$ $\II$-tiles and the single $\IV$-tile.
Since we are not in case 1, we cannot do so by filling an entire row or column with these tiles.
That means the entire circumference of the area containing $\I$-tiles must be surrounded by the boundary, $\II$-tiles or $\IV$-tiles.
We have only one $\IV$-tile available; since this is the only tile that allows us to create a non-convex corner in a connected subregion containing $\VI$-tiles, this enforces a single rectangle $\mathcal{R}$ in a corner of our region containing all $\I$-tiles; w.l.o.g., let that be the top-left corner of our region.
The sole $\IV$-tile has to sit in the bottom-left corner of this rectangle and $\II$-tiles along the right and bottom edges of $\mathcal{R}$ separate the $\I$-tiles in $\mathcal{R}$ from the $\VI$-tiles outside.
Note that all $\II$-tiles also have to be in $\mathcal{R}$: since $n_2 < w$, we cannot make full columns or rows of $\II$-tiles and thus cannot place them in the region containing $\VI$-tiles.

However, this does not mean that $\mathcal{R}$ needs to be free of $\VI$-tiles: indeed, we can take $\II$-tiles to frame rectangles of $\VI$-tiles and place these framed rectangles inside $\mathcal{R}$, as long as they do not overlap and do not touch the right or bottom boundary of $\mathcal{R}$.
This approach is sometimes necessary: otherwise, e.g., if $n_1 + n_2 + n_4$ is prime, it would force $\mathcal{R}$ to be of width or height $1$, but such a rectangle cannot contain a $\I$-tile; adding some additional $\VI$-tiles into $\mathcal{R}$ increases the necessary area, but may decrease the circumference of $\mathcal{R}$ by allowing a more favorable factorization of its area.

Thus, unlike the situation for cases~(1)~and~(2) or case~(3) with $n_4 > 1$, where we have simple polynomial-time checkable formulas that describe whether a tiling corresponding to either case exists, for this case we only have a pseudo-polynomial time algorithm.
For each of the $O(wh)$ possible height/width-combinations $(w_\mathcal{R}, h_\mathcal{R})$ of $\mathcal{R}$,
this algorithm runs a slight variant of the DP used for \cref{thm:126} to find the minimum number of $\II$-tiles needed to place the $n_6' = w_\mathcal{R}h_\mathcal{R} - n_1 - n_2 - n_4$ necessary $\VI$-tiles in $\mathcal{R}$; if that number is at most $n_2 - w_\mathcal{R} - h_\mathcal{R} + 2$ for any possible $(w_\mathcal{R}, h_\mathcal{R})$, a tiling exists;
otherwise, no tiling is possible.

{\bf Sufficiency of cases.}
Let $T$ be any tiling for this class.
We describe how we can map it to a tiling that fits one of the Cases (1)--(3).
Firstly, if $n_2 + n_4 \geq w$, we already described how we can handle all multiplicities in a way that fits Case (1).
Therefore, we can restrict our attention to the case $n_2 + n_4 < w$;
consider a tiling $T$ for this case.
Since there are both $\I$-tiles and $\VI$-tiles that cannot be adjacent,
$T$ must contain at least one path from one boundary to another, or a loop of $\II$- and $\IV$-tiles separating $\I$- and $\VI$-tiles.
Such a path has a bounded region (the inside of the loop, or the inside of the shorter loop formed by a path and the boundary); if the bounded region contains $\I$-tiles, we call the path or loop a $\I$-separator; otherwise, it is called a $\VI$-separator.
If $n_4 = 1$ and there is a $\I$-separator, the tiling $T$ must already be according to case (3).
Otherwise, let $s_1$ and $s_6$ be the number of tiles separated by $\I$-separators and $\VI$-separators.
Since both cases~(2)~and~(3) with $n_4 > 1$ construct a $\I$- or $\VI$-separator that maximizes the separated area among all such separators and can handle any smaller number of separated tiles as well, we can simply collect either all $s_1$ tiles or all $s_6$ tiles separated in $T$ and instead separate them as they would be separated in case (2) (for $s_6$) or case (3) (for $s_1$).
In either case, we obtain a tiling that corresponds to one of the cases (1)--(3).
\end{proof}

\subsection{\texorpdfstring{$\langle \TI, \TII, \TV, \TVI \rangle$}
{<I,II,V,VI>}}

\begin{theorem}
For the class $\langle \TI, \TII, \TV, \TVI \rangle$, there is a pseudo-polynomial time algorithm deciding whether a tiling exists.
\end{theorem}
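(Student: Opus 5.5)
The plan is to reduce the class $\langle \TI, \TII, \TV, \TVI \rangle$ to the structure already analysed for $\langle \TI, \TII, \TVI \rangle$ in \cref{thm:126}, treating $\V$-tiles symmetrically to $\II$-tiles via duality. Since $\I$- and $\VI$-tiles cannot be adjacent, every interface between an $\I$-region and a $\VI$-region is formed by $\II$- and $\V$-tiles. A $\II$-tile has exactly one half-edge, so it can border $\VI$-tiles only on one side. Dually, a $\V$-tile has exactly one side without a half-edge, so it can border $\I$-tiles only on one side. Neither tile type can turn a reflex corner. Consequently every connected $\VI$-region bordered by $\II$-tiles must be a rectangle, and every connected $\I$-region bordered by $\V$-tiles must be a rectangle. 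A separating layer may, however, be a mixed strip of $\II$- and $\V$-tiles, where a $\II$-tile (pointing towards the $\VI$ side) sits next to a $\V$-tile (facing the $\I$ side).

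The first step is a sufficiency lemma analogous to \cref{lem:2n-2-tiles-suffice}: if $\nII+\nV \geq 2w$, a tiling always exists. I would stack full rows of $\VI$-tiles at the bottom, then a row of $\V$- or $\II$-tiles, then full rows of $\V$-tiles, then $\II$-tiles, then $\I$-tiles at the top. Partial rows are repaired with a frame of the appropriate type, and $\V$-tiles fill any region between $\VI$ and $\II$ layers just as in the $\langle \TI, \TII, \TV \rangle$ construction of \cref{thm:cases3}.

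The second step is a monotonicity lemma in the spirit of \cref{lem:min-2-is-optimal}. For fixed $w,h$, and assuming $\nII+\nV<2w$, converting a boundary $\I$-tile into a $\II$-tile, or dually a boundary $\VI$-tile into a $\V$-tile, preserves tileability. The charging argument from \cref{lem:min-2-is-optimal} should carry over: it shows either that such a boundary tile exists or that $2w-1$ separator tiles are already present.

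With these lemmas, a small-separator tiling has the following structure. Some $\VI$-tiles sit in rectangles framed by $\II$-tiles, placed in the $\I$-area. Some $\I$-tiles sit in rectangles framed by $\V$-tiles, placed in the $\VI$-area. There is at most one "main" interface between a dominant $\I$-region and a dominant $\VI$-region. That interface is either a straight full row or column of mixed $\II$/$\V$-tiles (possible only if $\nII+\nV\geq w$), or it is absent, in which case one colour forms the background.

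The algorithm then enumerates a polynomial number of global configurations:
\begin{itemize}
\item which colour is the background, or where the straight mixed separator sits, together with the number of full rows or columns on each side;
\item for each side, the number of corners available.
\end{itemize}
For each configuration it runs the dynamic program from \cref{thm:126} twice. The first run computes $D_{\TVI}[x,k]$, the minimum number of $\II$-tiles needed to embed $x$ $\VI$-tiles as framed rectangles into an $\I$-background with $k$ corners free. The second, its dual, computes $D_{\TI}[y,k]$, the minimum number of $\V$-tiles needed to embed $y$ $\I$-tiles into a $\VI$-background. One then checks whether some split of $\nVI$ and $\nI$ between the two sides satisfies $D_{\TVI}\le \nII'$ and $D_{\TI}\le \nV'$. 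Surplus $\II$-tiles are absorbed into the $\I$-region and surplus $\V$-tiles into the $\VI$-region via the monotonicity lemma. Both tables have size $O(wh)$ and every transition enumerates $O(wh)$ rectangle shapes, so the running time is pseudo-polynomial.

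The main obstacle is the exhaustiveness argument: showing that any tiling with $\nII+\nV<2w$ can be normalised into one of the enumerated configurations. In particular, nested structures must be flattened without increasing the demand for $\II$- or $\V$-tiles. Examples are a $\V$-framed $\I$-island inside a $\II$-framed $\VI$-rectangle, or $\VI$-rectangles touching the boundary. To handle this, I would define, for each region, its exposed circumference and show that moving a rectangle into a boundary corner or onto the main separator never increases it. The non-overlap placement arguments would be repeated from the proof of \cref{thm:126}, where any failure to place the rectangles forces at least $2w$ separator tiles. This would be done independently for the two colours, since $\II$-frames and $\V$-frames live in disjoint backgrounds.
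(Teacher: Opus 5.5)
\textbf{There is a genuine gap in your structural decomposition.} It assumes that \II-frames and \V-frames live in disjoint backgrounds: \VI-rectangles framed only by \II-tiles inside an \I-background, \I-rectangles framed only by \V-tiles inside a \VI-background, plus at most one straight spanning separator. Both your DP tables and your normalization step rest on this premise, and it is false.

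A \II-tile has three empty sides, so it \emph{can} sit in a reflex corner of an \I-region. Likewise, a \V-tile can sit in a reflex corner of a \VI-region. So the interface between the part holding \V- and \VI-tiles and the part holding \I- and \II-tiles can be a single \emph{mixed} frame. Each frame edge is paid either by a \V-tile on the inside (empty side facing out) or by a \II-tile on the outside.

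Here is a concrete counterexample. Take $w=h=10$, $n_1=90$, $n_2=1$, $n_5=5$, $n_6=4$, with rows numbered from the top and columns from the left:
\begin{itemize}
\item \VI-tiles on the block $\{1,2\}\times\{1,2\}$;
\item \V-tiles on $(3,1),(3,2)$ with empty side down;
\item \V-tiles on $(1,3),(2,3)$ with empty side right;
\item a \V-tile on $(3,3)$ with empty side right, so its half-edges point up, left and down;
\item the single \II-tile on $(4,3)$, pointing up;
\item \I-tiles everywhere else.
\end{itemize}
This is a valid tiling, but your enumeration rejects the instance:
\begin{itemize}
\item every \II-framed \VI-rectangle needs at least two \II-tiles, and absorbing \V-tiles into \VI-rectangles only enlarges them;
\item five \V-tiles cannot frame $90$ \I-tiles;
\item $n_2+n_5<w$ rules out a spanning separator.
\end{itemize}
Replacing the \I-tile at $(10,10)$ by a \II-tile pointing out of the grid gives a counterexample with $n_2=2$, so the failure is not confined to degenerate counts.

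Your ``straight full row of mixed'' \II- and \V-tiles also cannot exist. In a row separating \VI above from \I below, a \II-tile must point up and so has empty left and right sides. A \V-tile must have its empty side down and so has half-edges left and right. The two can never be horizontally adjacent, so mixed spanning separators must step between two rows, as in the paper's Cases~(1) and~(2).

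\textbf{The missing idea is exactly what drives the paper's proof.} With at least two \II- and two \V-tiles, the paper uses mixed barriers to get explicit conditions (Cases~(1)--(5)) and argues that all other such instances are infeasible. These barriers are a straight or stepped barrier of $w$ or $w+1$ tiles, or a corner barrier around a near-square block of \VI-tiles.

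A DP is needed only when $n_2=1$ (dually $n_5=1$). Then the \I-side can have just one reflex corner. So all \V- and \VI-tiles lie in a rectangle in a grid corner, bordered by \V-tiles, with the lone \II-tile next to the rectangle's corner cell opposite the grid corner. This rectangle may contain rectangular \I-holes framed by \V-tiles; your example above is of exactly this form. As in Case~(3) with $n_4=1$ for $\langle\TI,\TII,\TIV,\TVI\rangle$, the paper enumerates the $O(wh)$ dimensions of this rectangle and runs a variant of the DP of \cref{thm:126}, with the colours' roles swapped, to place the holes.

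To repair your approach you must add this configuration, and mixed frames in general, to both the enumeration and the exhaustiveness argument. Two dual runs of the \cref{thm:126} DP over single-type frames do not suffice.
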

\begin{figure}[htbp]
\centering
\includegraphics[scale=0.41]{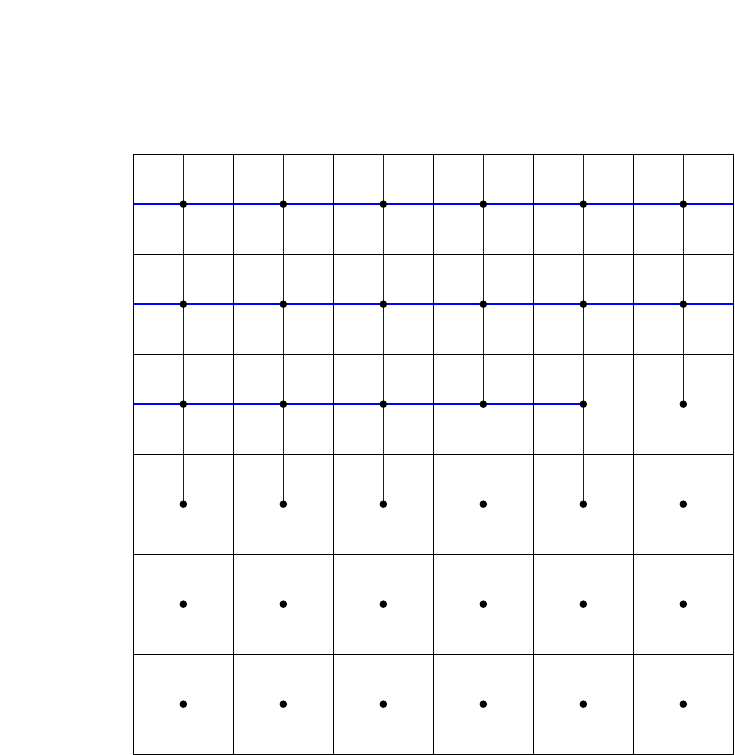}
\hspace{1cm}
\includegraphics[scale=0.41]{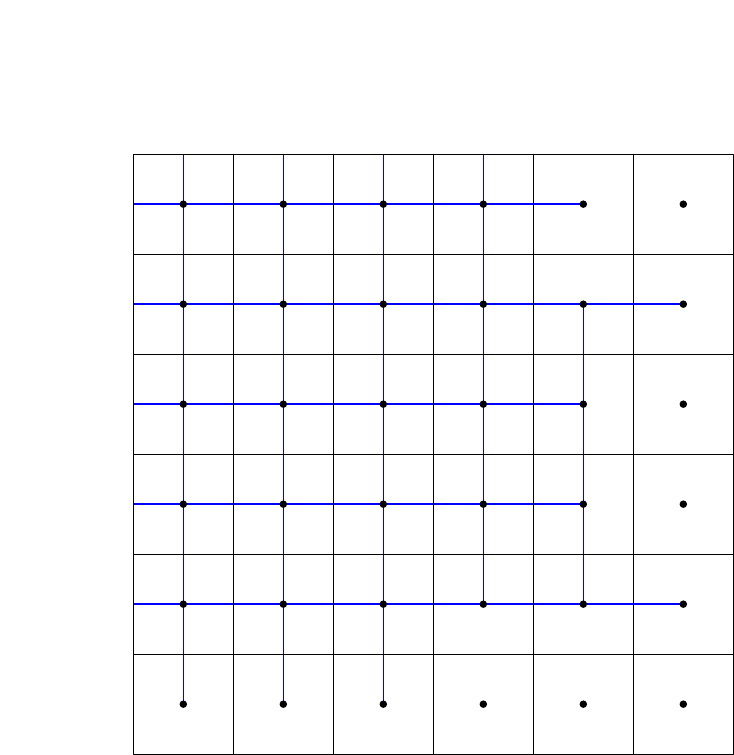}
\caption{Examples of tilings for the class $\langle \I, \II, \V, \VI \rangle$ corresponding to the cases 2 (left) and 5 (right). The right picture should be seen as a corner of a bigger grid filled with type $\I$ tiles.}
\end{figure}
\begin{proof}
We once again have to separate the  $\I$ and $\VI$ tiles since they can never be adjacent.
We distinguish the following cases where placement is always possible.
The cases should be read in an if-else fashion.
Additionally, keeping in mind that this set is its own dual, all cases also exist in a dual variant, which is not explicitly mentioned.\\
\textbf{Case (1)} $\displaystyle n_2 \equiv n_6 \mod{w} \land n_5 = w - n_2$ \\
Place full rows of type $\VI$ tiles until there are not enough left, then place a partial row of type $\VI$ tiles followed by all type $\V$ tiles, this should fill exactly one row. Next, place all type $\II$ tiles with their half-edge up on the next row. The remaining space can be filled with type $\I$ tiles. \\
\textbf{Case (2)} $\displaystyle n_2 + n_6 > w$ \\
In this case, we need $w + 1$ type $\II$ and $\V$ tiles to act as a barrier (including at least one of each).
Once these tiles have been picked the grid can be filled as follows. First, place the type $\V$ tiles that are not needed for the barrier, making sure that the lowest row has half-edges going out downward. Next, place all type $\VI$ tiles. Afterwards, place the type $\V$ tiles for the barrier with their side without a half-edge downward, the last type $\V$ tile has to be rotated to have the side without half-edge be to the right instead (assuming we are filling rows left to right). Next, a single type $\II$ tile can be placed below this last type $\V$ tile, then the rest of the type $\II$ tiles for the barrier can be placed to make sure all half-edges are completed. Then all type $\I$ tiles can be placed. Finally, the remaining type $\II$ can be placed in pairs. \\
\textbf{Case (3)} $\displaystyle \text{let } a = \frac{n_2 + n_5-1}{2}; n_6 \leq \lfloor a \rfloor\lceil a \rceil +\lceil a \rceil \land n_5 \leq \lceil a \rceil - (n_6 - \lfloor a \rfloor\lceil a \rceil)$ \\
If $n_6 < \lfloor a \rfloor\lceil a \rceil$ the structure from case 5 should be used instead. \\
In this case, $\lfloor a \rfloor\lceil a \rceil$ type $\VI$ tiles can be placed in a $\lfloor a \rfloor \times \lceil a \rceil$ rectangle in the top-left corner, with the rest of the type $\VI$ tiles in the row below it. Then the type $\V$ tiles can be placed with their side without a half-edge facing down, the last type $\V$ tile should be rotated to have the side without half-edge to the right instead. Next, all remaining dangling half-edges can be connected to a type $\II$ tile, with the remaining type $\II$ tiles placed on an opposite border of the grid. Finally, the rest of the area can be filled with the type $\I$ tiles. \\
\textbf{Case (4)} $\displaystyle \text{let } a = \frac{n_2 + n_5-1}{2}; n_6 \leq \lfloor a \rfloor\lceil a \rceil +\lceil a \rceil \land n_5 = \lfloor a \rfloor + \lceil a \rceil  - (n_6 - \lfloor a \rfloor\lceil a \rceil)$ \\
If $n_6 < \lfloor a \rfloor\lceil a \rceil$ the structure from case 5 should be used instead. \\
In this case, $\lfloor a \rfloor\lceil a \rceil$ type $\VI$ tiles can be placed in a $\lfloor a \rfloor \times \lceil a \rceil$ rectangle in the top-left corner, with the rest of the type $\VI$ tiles in the row below it. Then the type $\V$ tiles can be placed with their side without a half-edge facing down to fill the rest of the row. Then the column to the right of this rectangle should be filled with type $\V$ tiles with their side without a half-edge facing to the right. A single type $\II$ tile should be placed below this last type $\V$ tile. The other type $\II$ tiles can be placed on an opposite border of the grid. Finally, the rest of the area can be filled with the type $\I$ tiles. \\
\textbf{Case (5)} $\displaystyle \text{let } b = \frac{n_2 + n_5-3}{2}; n_6 \leq \lfloor b \rfloor\lceil b \rceil +\lceil b \rceil \land n_2 \geq 2$ \\
In this case, we first need to carefully select tiles to act as a barrier between the type $\I$ and $\VI$ tiles.
We need $\lceil b\rceil + 3 + c$ tiles, where $c$ is the amount of complete size $\lceil b \rceil$ rows we can make using type $\VI$ tiles and the $\V$ tiles not needed for the barrier.
At least 2 of the barrier tiles should be of type $\II$, and at least 1 of type $\V$. If the amount of type $\V$ tiles for the barrier + the amount of type $\VI$ and $\V$ tiles left over after making the $c$ rows is not enough to make another row, then one less tile is needed for the barrier.
We can now place the type~$\V$ tiles not needed for the barrier and the type $\VI$ tiles in the top-left corner in $c$ rows with one partial row below it, making sure that there are dangling half-edges to the outside of this block. We can now place the barrier type $\V$ to fill up the rest of this partial row, with their side without a half-edge facing down. If we run out of type $\V$ tiles before filling the row, the last one should be rotated to have the side without a half-edge facing right instead. Otherwise, we can then place another type $\V$ tile to the right of this row. Next, we can place type $\V$ tiles above this last one until we run out or reach the border of the grid, they should be placed with their side without a half-edge to the right. If we run out then the last one should be rotated to have the side without a half-edge to the top instead. If we did not run out, we can place the remaining type $\V$ tiles below the type $\VI$ tiles in our bottom row, their sides without a half edge should face down. The last one should be rotated to have that side face right instead. Any remaining dangling half-edges should be filled with the type $\II$ tiles we have reserved for the barrier. The rest of the type $\II$ tiles can be placed on an opposite border of the grid. Finally, the rest of the grid can be filled with type $\I$ tiles. 
\\ \\
Any other situation with at least two of each type $\II$ and $\V$ tiles is impossible to place, since there will not be enough tiles to create a barrier between the type $\I$ and $\VI$ tiles. \\
The only open situation is when there is exactly one type $\II \hat= \V$ tile. This situation results in a rectangular area in a corner of the grid where all the type $\VI$ and $\V$ tiles are placed. This area can have some holes containing type $\I$ tiles. This area should have type $\V$ tiles along all borders (internal or external). The single type $\II$ tile can be used in the corner of the area opposite to the corner of the grid. This is a similar structure as described for the $\langle \I,\II,\IV,\VI \rangle$ class. The same DP algorithm can be used. 
\end{proof}

\end{document}